\newtheorem{tw}{Theorem}
\newtheorem{lem}[tw]{Lemma}
\newtheorem{cor}[tw]{Corollary}
\newcommand{\czero}{\mathbf{c_0}}
\newcommand{\cjeden}{\mathbf{c_1}}
\newcommand{\propagation}{\mathrm{PROP}}
\newcommand{\reduction}{\mathrm{REDUCTION}}
\newcommand{\dzero}{\mathrm{D0}}
\newcommand{\djeden}{\mathrm{D1}}
\newcommand{\ddwa}{\mathrm{D2}}
\newcommand{\iscount}{\mathrm{ISCOUNT}}
\newcommand{\wiekszapolowa}{bp(G)}
\newcommand{\ekat}{ec(G)}
\newcommand{\ctrzy}{1.1394^n}
\newcommand{\cogol}{1.2369^n}
\begin{document}
\title{Counting independent sets via Divide Measure and Conquer method}
\author{Konstanty Junosza - Szaniawski  \and Michał Tuczyński}
\institute
{Warsaw University of Technology \\
Faculty of Mathematics and Information Science \\
Koszykowa 75, 00-662 Warsaw, Poland \\
\email{$\{$k.szaniawski,  m.tuczynski$\}$@mini.pw.edu.pl}
}
\maketitle
\begin{abstract}
In this paper we give an algorithm for counting the number of  all
independent sets in a given graph which works in time $O^*(\ctrzy)$ for subcubic graphs and in time $O^*(\cogol)$ for general graphs, where $n$ is the number of vertices in the instance graph, and polynomial space. The result comes from combining two well known methods ``Divide and Conquer'' and ``Measure and Conquer''. We introduce this new concept of Divide, Measure and Conquer method and expect it will find applications in other problems. 

The algorithm of Bj\"orklund, Husfeldt and Koivisto for graph colouring with our algorithm used as a subroutine has complexity $O^*(2.2369^n)$ and is currently the fastest graph colouring algorithm in polynomial space.
\end{abstract}
%
\section{Introduction}
Recently much attention has been paid to the algorithmic aspects of some
counting problems. Although many of the problems (e.g. counting
independent sets or matchings in a graph) are known to be
\#P-Complete (see Vadhan \cite{vadhan}), a remarkable progress has
been done in designing exponential time algorithms solving them.
Dahll\"of, Jonsson, Wahlstr\"om \cite{dal} constructed algorithms that
count maximum weight models of 2-SAT and 3-SAT formulas in time
$O^*(1.2561^n)$ and $O^*(1.6737^n)$, respectively. The former
bound was later improved to $O^*(1.2461^n)$ by F\"urer and
Kasiviswanathan \cite{furer} and subsequently to $O^*(1.2377^n)$
by Wahlstr\"om \cite{walstrom}. 
Since independent sets in a graph naturally correspond to models of 2-SAT formulas with all variables negated, algorithm of Wahlstr\"om \cite{walstrom} was up to now the fastest algorithm counting independent sets. For claw-free graphs there is a faster algorithm by   Junosza-Szaniawski, Lonc and Tuczynski \cite{wg}. Other interesting counting algorithms were designed to count maximal independent sets by   Gaspers, Kratsch, and  Liedloff \cite{gaspers} for general graphs   and  by Junosza-Szaniawski and Tuczynski \cite{sofsem} for subcubic graphs.

In this paper we present an algorithm for counting independent sets in time $O^*(\ctrzy)$ in subcubic graphs and in time $O^*(\cogol)$ in general graphs, where $n$ is the number of vertices in the instance graph, and polynomial space. There is a strong motivation for an algorithm for counting independent sets.  Bj\"orklund, Husfeldt and Koivisto \cite{bh} gave an algorithm (based on the inclusion-exclusion principle) for graph colouring in polynomial space, using an algorithm for counting independent sets as a subalgorithm. If the counting algorithm runs in time $O^*(c^n)$ then their colouring algorithm runs in time  $O^*((1+c)^n)$. Hence, the algorithm of Bj\"orklund {\it et al.} for graph colouring with our algorithm used as a subroutine has complexity $O^*(2.2369^n)$ and is currently the fastest graph colouring algorithm in polynomial space. Moreover our algorithm can be easily transformed to count max-weighted models of 2-SAT formulas.  

Our result comes from combining two well known methods: ''Divide and Conquer'' and ''Measure and Conquer'' and is inspired by the paper of Dahll\"of, Jonsson, Wahlstr\"om \cite{dal}. Their main algorithm is a branching algorithm with some reductions and its analysis is based on Measure and Conquer method (for general information see \cite{fomin2}) and two crucial ideas. The first is to use the number of vertices of degree three as a measure for subcubic graph  (vertices of degree one and two are sooner or later removed by reductions so they do not increase the complexity in terms of the $O^*$ notation). The second idea is to use measure depending on the density of a graph for graphs with maximum degree greater than three. This idea allows to take advantage of the fact that higher density guarantees a better vertex for branching in the analysis.  F\"urer and Kasiviswanathan \cite{furer} did more careful analysis of Dahll\"of {\it et al.} with the same methods. They simply applied the number of vertices of degree three as a measure to subcubic graphs with the lowest density and a measure depending on the density for all the other graphs. Their algorithm was the fastest for subcubic graphs and works in time $O^*(1.1505^n)$ for such graphs. Wahlstr\"om's improvement in \cite{walstrom} was defining a measure with the weights of vertices depending on their degrees and the density of a graph for graphs with maximum degree greater than three. The complexity of this algorithm depends on the complexity of the algorithm counting independent sets in subcubic graphs, and any improvement for such graphs gives an improvement for the general case. 
\begin{center}

	\renewcommand{\arraystretch}{1.2}
		\begin{tabular}{|l|c|c|c|c|c|}
			\hline
		Constant $c$	 & $\Delta(G)=3$&  arbitrary $\Delta(G)$\\ \hline
Dahll\"of, Jonsson, Wahlstr\"om \cite{dal} &    1.1893     & 1.2561\\
F\"urer and Kasiviswanathan \cite{furer}&1.1504 &    1.2461\\
Wahlstr\"om \cite{walstrom}&1.1504			 & 1.2377 \\
this paper & 1.1393 & 1.2369\\
			\hline
		\end{tabular}
%
\end{center}

Moreover  Dahll\"of {\it et al.} \cite{dal} used another approach based on Divide and Conquer method for special classes of graphs. These are classes of graphs (e.g. planar graphs) for which a suitable ''separator theorem'' holds. A separator theorem states that in the graph there exist a ''small'' cut-set such that components of the graph obtained by removing the cut-set are ''not too big'' in the sense of the number of vertices.  

We managed to apply this approach to subcubic graphs, which implies an improvement for general graphs. Our main idea is based on combining Divide and Conquer with Measure and Conquer methods. The key idea  in our algorithm is to find a ''small'' cut-set $S$, such that the components of $G-S$ have ''not too big'' measure.  Dahll\"of {\it et al.} \cite{dal} considered as a measure of the components just the number of vertices, we use a more sophisticated one: the number of vertices of degree three after removing all leaves. Moreover we do not branch on the whole cut-set at once, but on vertices one by one performing reductions after each branching. This allows us to take the reduced vertices into account in the complexity  analysis. This approach can be seen as a typical branching on a vertex with two differences. Firstly: the vertex for branching is chosen for his global properties (belonging to a small cut-set) not just local (the sum of degrees of its neighbours). Secondly: in Measure and Conquer complexity analysis we need to consider the size of the remaining cut set. To find a proper cut-set we use the result of Monien and Preis \cite{monien}, which states that in any sufficiently large $3$-regular graph there exists an edge cut of size at most $(\frac{1}{6}+\varepsilon)n$ such that the components obtained by removing it have at most $\lceil \frac{n}{2} \rceil$ vertices. 
An open question is how to find a better cut set for branching. Recently this technique was used independently in \cite{arxiv}. 

We use this approach for subcubic graphs with low density, for all the other graphs we apply Wahlstr\"om's \cite{walstrom} algorithm and his complexity analysis (adapted). 
\section{Preliminaries}
For functions $f$ and $g$ we write $f(n)=O^*(g(n))$ if $f(n)=O(g(n)p(n))$, where $p$ is a polynomial.

We denote by $V(G)$ the vertex set of a graph $G$ and by $E(G)$ its edge set. Let $n(G)$ and $m(G)$ be the number of vertices and the number of edges of $G$, respectively. We write $n$ instead of $n(G)$ and $m$ instead of $m(G)$ whenever it does not lead to a confusion. An \textit{open neighbourhood} of a vertex $v$ is the set of vertices $N(v)=\{u\in V(G):uv\in E(G)\}$ and a \textit{closed neighbourhood} of $v$ is $N[v]=N(v)\cup\{v\}$. Let $d(v)=|N(v)|$ be the \textit{degree} of a vertex $v$. By $n_i(G)$ and $n_{\geq i}(G)$ we denote the number of vertices of degree $i$ and at least $i$ in $G$, respectively. A vertex of degree 0 is called \textit{isolated} and a vertex of degree $1$ is called a \textit{leaf}. By $\Delta(G), \delta(G)$ we denote the maximum, minimum degree of a vertex in $G$, respectively. We say that vertices $u$ and $v$ are \textit{topological neighbours} if they are ends of a path of non-zero length with all internal vertices of degree $2$. We say that a vertex is \textit{topologically self-adjacent} if it is its own topological neighbour.

For a vertex set $U\subset V(G)$, $G[U]$ is the subgraph induced by $U$ and $G-U=G[V(G)-U]$. If $U=\{u\}$, then we write $G-u$ instead
of $G-\{u\}$.
A set $U$ of vertices of $G$ is a \textit{cut set} if $G-U$ has more components than $G$. A vertex $u$ is a \textit{cut vertex}, if
$U=\{u\}$ is a cut set.
A set $S\subset V(G)$ is an \textit{independent set} in $G$, if no edge in $G$ has both ends in $S$. Let $IS(G)$ denote the set of independent sets of $G$. By $\kappa(G)$ we denote the criminality of the smallest cut set in $G$. 

Let us consider the following example to explain the purpose of the next definition. Suppose there is a leaf $v$ in $G$ and let $u$ be the neighbour of $v$. The numbers of independent sets containing vertex $u$ in $G$ and in $G-v$ are the same, because $v$ is excluded from any independent set of $G$ containing $u$. The number of independent sets in $G$ not containing $u$ equals two times the number of independent sets in $G-v$ not containing $u$, since for every $S\in IS(G-v)$ avoiding $u$ both $S$ and $S\cup\{v\}$ are independent sets in $G$. Hence we could remove $v$ from $G$ and count independent sets in a smaller graph $G-v$ if we store the information to multiply the number of independents sets in $G-v$ not containing $u$ by $2$. Notice that this approach can be applied not only when $G$ contains a leaf, but also when $G$ contains any cut vertex. To take advantage of this observation we define, after Dahll\"of {\it et al.} \cite{dal}, a so called  \textit{cardinality function}. Moreover, unlike Dahll\"of {\it et al.} \cite{dal}, we define the values of a cardinality function not only for vertices of a graph but also for its edges. A cardinality function of a graph $G$ is $\mathbf{c}:(\{1\}\times V(G))\cup (\{0\}\times (V(G)\cup E(G)))\rightarrow \mathbb{Q}-\{0\}$. For convenience we write
$\cjeden(v)$, $\czero(v)$ and $\czero(e)$  instead of $\mathbf{c}(1,v)$, $\mathbf{c}(0,v)$ and $\mathbf{c}(0,e)$, respectively.  Given a cardinality function
$\mathbf{c}$ and an independent set $S$ of $G$, we define
$$\mathbf{c}_G(S)=\prod\limits_{v\in S}\cjeden(v)\prod\limits_{v\notin
S}\czero(v)\prod\limits_{e\cap S=\emptyset}\czero(e),$$
and $$\mathbf{c}(G)=\sum\limits_{S\in IS(G)} \mathbf{c}_G(S).$$

Notice that if $\cjeden(v)=1$, $\czero(v)=1$ and $\czero(e)=1$  for every vertex $v\in V(G)$ and every edge $e\in E(G)$, then $C(S,\mathbf{c})=1$ for any independent set in $G$ and, thus,
$\mathbf{c}(G)$ is equal to the number of independent sets in
$G$. Throughout the course of the algorithm the vertices are removed from the graph. For any subgraph $H$ obtained in the course of the algorithm applied to count the number of independent sets in a graph $G$ the values of a cardinality function of $H$ are the factors that have to be multiplied to obtain the true value of the number of independent sets in the input graph $G$.

One of our reductions, D2, may add a vertex to the graph. Let $A(G)\subset V(G)$ denote the set of vertices added by the D2 reduction throughout the course of the algorithm. During the algorithm every vertex from $A(G)$ has degree at most $2$ and vertices from $A(G)$ are non-adjacent. At the beginning of the algorithm the set $A(G)$ is empty.

A cardinality function is called \textit{proper} if the following conditions are satisfied: 
\begin{enumerate}
\item $\czero(x)>0$ for $x\in V(G)\cup E(G)$, 
\item $\cjeden(x)>0$ for $x\in V(G)-A(G)$,
\item $\cjeden(x)+\czero(x)\cdot\underset{y\in N(x)}{\prod\czero(x}y)>0$ for $x\in V(G)$. 
\end{enumerate}
The only reason of this definition is purely technical and it is used to ensure that no division by zero appears during the algorithm.
Our algorithm solves the problem of computing the number $\mathbf{c}(G)$ for a given graph $G$ and a proper cardinality
function $\mathbf{c}$. It is easy to check that every cardinality function obtained during the course of the algorithm is proper.
For  $u,v\in V$, $u\neq v$ let
\[
IS(G,v^\eta)=\begin{cases}
\{S\in IS(G): v\notin S\}&\text{ if } \eta=0\\
\{S\in IS(G): v\in S\}&\text{ if } \eta=1\\
\end{cases}
\]
%
\[
IS(G,u^\zeta, v^\eta)=\begin{cases}
\{S\in IS(G): u,v\notin S\}&\text{ if } \zeta=\eta=0\\
\{S\in IS(G): u\notin S, v\in S\}&\text{ if } \zeta=0, \eta=1\\
\{S\in IS(G): u\in S, v\notin S\}&\text{ if } \zeta=1, \eta=0\\
\{S\in IS(G): u,v\in S\}&\text{ if } \zeta=\eta=1.\\
\end{cases}
\]
For  $\zeta,\eta\in\{0,1\}$ let 
$
\mathbf{c}(G,u^\zeta)=\underset{S\in IS(G,u^\zeta)}{\sum\mathbf{c}_G(S)}
$
and 
$\mathbf{c}(G,u^\zeta,v^\eta)=\underset{S\in IS(G,u^\zeta,v^\eta)}{\sum\mathbf{c}_G(S)}$.
We assume, that if  $IS(G,u^\zeta,v^\eta)=\emptyset$, then  $\mathbf{c}(G,u^\zeta,v^\eta)=0$. 

For a subcubic graph $G$ by $B(G)$ we denote a graph, such that $V(B(G))$ is the set of vertices of degree $3$ of $G$, and there is an edge $xy\in E(B(G)))$ if and only if there is a $x-y$-path in $G$ with all inner vertices of degree $2$. 

A bisection of a graph $G$ is a partition of the vertex set into sets $V_0,V_1$, such that  $|V_0|\le |V_1|\le \lceil\frac{n}{2}\rceil$. The width of a bisection $V_0,V_1$ is the number of edges between $V_0$ and $V_1$.
The following result is crucial in our algorithm:
\begin{tw}[Monien, Preis \cite{monien}]\label{ecut} For any $\varepsilon> 0$ there is a value $n_\varepsilon$ such that in any 3-regular graph $G$ with $n\ge n_\varepsilon$ vertices there exist bisection $V_0,V_1$ of width at most $(\frac{1}{6}+\varepsilon)n$. Moreover such bisection  can be found in polynomial time and space. 
\end{tw}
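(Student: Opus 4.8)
The plan is to prove the bound by a local-improvement (local search) argument on balanced partitions, since averaging is far too weak here. A uniformly random balanced partition of a cubic graph has both endpoints of a fixed edge on opposite sides with probability $\frac{n}{2(n-1)}\approx\frac12$, so it cuts roughly $\frac{1}{2}m=\frac34 n$ edges in expectation. As we must reach only $(\frac16+\varepsilon)n$, no direct probabilistic or greedy bound can work, and we must instead exploit the structure of a partition that \emph{cannot} be locally improved.

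First I would fix $n$ even (the parity slack and the $\lceil\frac{n}{2}\rceil$ bound differ from an exact halving by an $O(1)$ amount, absorbed into $\varepsilon n$) and start from an arbitrary bisection $(V_0,V_1)$ with $|V_0|=|V_1|$. I would then define a family of balance-preserving local moves: exchange a set $X\subseteq V_0$ with a set $Y\subseteq V_1$ of equal bounded size $|X|=|Y|\le k$, where $k$ is a constant depending on $\varepsilon$. A move is \emph{improving} if it strictly decreases the number of crossing edges. Repeatedly applying improving moves terminates after at most $m=\frac32 n$ steps, because the cut is a nonnegative integer that strictly decreases; each step can be found by examining $O(n^{2k})$ candidate pairs of sets, so the whole procedure runs in polynomial time and polynomial space. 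This already establishes the algorithmic claim once the bound on the final cut is proven.

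Second, and this is the heart of the matter, I would analyse a partition admitting no improving move and show its cut is at most $(\frac16+\varepsilon)n$. Writing $c(v)$ for the number of neighbours of $v$ on the opposite side, the cut equals $\sum_{v\in V_0}c(v)=\frac12\sum_{v\in V(G)}c(v)$, so it suffices to bound the average crossing degree by $\frac13$. The single-vertex case is easy: swapping non-adjacent $u\in V_0$, $v\in V_1$ changes the cut by $6-2c(u)-2c(v)$, so at a local optimum $c(u)+c(v)\le 3$ for every non-adjacent crossing pair. Since $G$ is cubic, this forces all vertices with $c\ge 2$ into a constant-size neighbourhood and bounds their total contribution by $O(1)$. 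The difficulty is the vertices with exactly one crossing edge, which single swaps cannot remove (exchanging an adjacent light pair in fact \emph{raises} the cut) and which could a priori number $\Theta(n)$. Here the larger moves are essential: I would argue that a sufficiently long alternating configuration along crossing edges always contains a bounded-size swap that decreases the cut, so at a local optimum such configurations are short and sparse, and a discharging argument charging crossing edges against the surrounding internal structure then caps their density at $\frac13$.

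The main obstacle I expect is precisely this structural/discharging step. Single-vertex swaps are provably insufficient to reach $\frac16$, so the entire weight of the theorem rests on choosing the right constant-size local moves and proving that their absence forces the tight density bound; getting exactly the constant $\frac16$ (rather than some larger fraction) is where the real work lies. The parameters $\varepsilon$ and $n_\varepsilon$ enter only to absorb the $O(1)$ contribution of the heavy vertices and the boundary and parity corrections, all of which are negligible against $\varepsilon n$ once $n\ge n_\varepsilon$.
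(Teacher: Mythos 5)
The statement you are proving is Theorem~\ref{ecut}, which the paper does not prove at all: it is quoted verbatim from Monien and Preis \cite{monien}, so the benchmark is their helpful-set proof, a long and intricate argument. Your proposal correctly identifies the right \emph{family} of techniques --- bounded-size local improvement, with the polynomial running time coming from enumerating $O(n^{2k})$ candidate moves and the monotone integer decrease of the cut --- but it proves nothing beyond that algorithmic shell. The entire content of the theorem is the claim that a bisection admitting no improving exchange of size at most $k(\varepsilon)$ has width at most $(\frac{1}{6}+\varepsilon)n$, and for this you offer only the unproved assertions that ``a sufficiently long alternating configuration always contains a bounded-size swap'' and that ``a discharging argument then caps their density at $\frac{1}{3}$''; you concede yourself that this is where the real work lies. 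Moreover, the Monien--Preis proof is not of the pure swap-local-optimum form you propose: it alternates moving a \emph{helpful set} (a set whose one-sided transfer strictly decreases the cut but destroys the balance by a bounded amount) with a \emph{balancing set} that restores the cardinalities without increasing the cut too much, and the existence of such sets whenever the width exceeds $(\frac{1}{6}+\varepsilon)n$ rests on a lengthy structural case analysis. Whether a strictly balance-preserving $k$-swap optimum attains the constant $\frac{1}{6}$ at all is exactly what your sketch does not establish.

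There is also a concrete false step in the part you do argue. From the single-swap inequality $c(u)+c(v)\le 3$ for non-adjacent $u\in V_0$, $v\in V_1$ you conclude that all vertices with $c\ge 2$ lie in a constant-size neighbourhood and contribute $O(1)$. This only follows when \emph{both} sides contain heavy vertices; the inequality places no cap on one-sided heavy configurations. For instance, take a cubic graph in which $V_0$ consists of $n/4$ vertices of $c=2$ inducing a perfect matching plus $n/4$ vertices of $c=0$ inducing a cubic subgraph, while $V_1$ induces a disjoint union of cycles with every vertex having exactly one crossing edge: every non-adjacent pair satisfies $c(u)+c(v)\le 3$, swapping an adjacent pair changes the cut by $8-2c(u)-2c(v)\ge 2$, so the bisection is $1$-swap optimal, yet it has $\Theta(n)$ heavy vertices and width $n/2$. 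So eliminating the $c\ge 2$ vertices, just like eliminating the $c=1$ vertices, already requires the larger moves whose analysis you defer, and the $O(1)$ discharge of heavy vertices cannot be salvaged at the single-swap level. In short, the proposal is a plausible research plan aimed at the correct technique, but the gap it leaves open is essentially the whole of the Monien--Preis theorem.
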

For a subcubic graph $G$ and a partition $V_0,V_1$ of $V(B(G))$ let $E(V_0,V_1)=\{e\in E(B(G)): |e\cap V_0|=|e\cap V_1|=1\}$ and $e(V_0,V_1)=|E(V_0,V_1)|$. 
\section{Procedures}
Our main algorithm $\iscount$ is a branch and reduce algorithm [see \cite{fomin2}]. The general idea is very simple: choose a vertex $v$, compute recursively the number of independent sets containing $v$ and those omitting $v$ and sum up the results. Apart from the recursive calls a few reductions are performed. They are implemented in procedures $\reduction$, $\propagation$, $\dzero$, $\djeden$, $\ddwa$. During each reduction the cardinality function is adjusted in such a way that the number $\mathbf{c}(G)$ is not changed. The procedure $\reduction$ removes vertices of degree $0$ and $1$. 

\begin{algorithm}[H]
\caption {$\reduction(G,\mathbf{c})$}
\While {$\delta(G)<2$ and $n(G)>2$}
{
\If {there exists an isolated vertex $v$}
{
Let $u\neq v$ be any vertex of $G$.\label{3} \hfill \textbf{(R1)}\\
$\czero(u)\leftarrow\czero(u)\cdot(\czero(v)+\cjeden(v))$, $\text{ }$
$\cjeden(u)\leftarrow\cjeden(u)\cdot(\czero(v)+\cjeden(v))$
}
\ElseIf {there exists a leaf $v$}
{
Let $u$ be the neighbour of $v$.\hfill \textbf{(R2)}\\
$\czero(u)\leftarrow \czero(u)\cdot(\cjeden(v)+\czero(v)\cdot\czero(uv))$, $\text{ }$
$\cjeden(u)\leftarrow \cjeden(u)\cdot\czero(v)$
}

$G\leftarrow G-v$\\
}
\Return $(G,\mathbf{c})$
\end{algorithm}


The procedure $\propagation$ is used to simplify the graph, when independent sets avoiding vertex $v$ if $\eta=0$ or containing $v$ if $\eta=1$ are counted. 

\begin{algorithm}[H]
\small
\caption {$\propagation(G,\mathbf{c},v,\eta)$}

\If {$\eta=0$}
{
$c\leftarrow \czero(v)$\\
\lForEach {$u\in N(v)$ \label{isproploop1}}
{
$\czero(u)\leftarrow \czero(u)\cdot\czero(uv)$
}
$G\leftarrow G-v$\\
}
\If {$\eta=1$}
{
$c\leftarrow \cjeden(v)$\\
\lForEach {$u\in N(v)$ \label{isproploop1}}
{
$c\leftarrow c\cdot \czero(u)$
}\\
\lForEach {$e \in E(G)$ such that $e\subset N(v)$ \label{isproploop1}}
{
$c\leftarrow c\cdot \czero(e)$
}\\
\lForEach {$uw$ such that $u\in N(v)$, $w\notin N[v]$ \label{isproploop2}}
{
$\czero(w)\leftarrow \czero(w)\cdot\czero(uw)$
}
$G\leftarrow G-N[v]$\\
}
Let $x$ be any vertex of  $G$.\label{11}\\
$\cjeden(x)\leftarrow \cjeden(x)\cdot c$, $\text{ }$
$\czero(x)\leftarrow \czero(x)\cdot c$\\
\Return{$(G,\mathbf{c})$}
\end{algorithm}

The next three reductions base on some elementary properties of independent sets in graphs of connectivity $0$, $1$ and $2$.  Notice that  any independent set in a disconnected graph is a union of independent sets of its components. If there is a cut vertex $v$ in $G$ then every independent set not containing $v$ (containing $v$) is a union of independent sets in $G-v$ (independent sets in $G-N[v]$ with $\{v\}$). Similarly for a graph with two element cut set. If the connectivity of $G$ is at most 2 then there exist subgraphs $G_1$ and $G_2$ of $G$ such that $V(G_1)\cup V(G_2)=V(G)$, $|V(G)_1\cap V(G_2)|\le 2$ and $E(G_1)\cup E(G_2)=E(G)$.

For clarity of the pseudocodes we omit the calls of $\iscount$ in the next three procedures. The values appearing in the pseudocodes can be computed in the following way (for the description of the algorithm $\iscount$ see section \ref{setiscount}):
$\mathbf{c}(G_1)=\iscount(G_1,\mathbf{c},\emptyset,\emptyset)$,\\
$\mathbf{c}(G_1,v^{\eta})=\iscount(\reduction(\propagation(G_1,\mathbf{c},v,\eta)),\emptyset,\emptyset)$,\\
$\mathbf{c}(G_1,u^{\zeta},v^{\eta})=\iscount(\reduction(\propagation(\propagation(G_1,\mathbf{c},u,\zeta),v,\eta)),\emptyset,\emptyset)$.

\begin{algorithm}[H]
\small
\caption {$\dzero(G,\mathbf{c},G_1)$}

Let $v$ be any vertex of $G-V(G_1)$.\\
$\cjeden(v)\leftarrow\cjeden(v)\cdot\mathbf{c}(G_1)$, $\text{ }$ 
$\czero(v)\leftarrow\czero(v)\cdot\mathbf{c}(G_1)$\\
$G\leftarrow G-V(G_1)$\\
\Return{$(G,\mathbf{c})$}
\end{algorithm}

%

\begin{algorithm}[H]
\small
\caption {$\djeden(G,\mathbf{c},v,G_1)$}

$\cjeden(v)\leftarrow\mathbf{c}(G_1,v^1)$, $\text{ }$ 
$\czero(v)\leftarrow\mathbf{c}(G_1,v^0)$\\
$G\leftarrow G-(V(G_1)-\{v\})$\\
\Return{$(G,\mathbf{c})$}
\end{algorithm}
%

The next procedure is used when $\kappa(G)=2$ and there is cut-set $\{u,v\}$ such that $G-\{u,v\}$ has at lest two components  containing vertices of degree 3 in $G$. Depending on the values of the cardinality function the set $V(G_1)-\{u,v\}$ will be removed from $G$ or replaced by one vertex. 
 
For any $\zeta,\eta\in\{0,1\}$ let $\mathbf{c}(u^\zeta,v^\eta)=\mathbf{c}(G_1,u^\zeta,v^\eta)/(\mathbf{c_\zeta}(u)\cdot\mathbf{c_\eta}(v))$.

%
%
%

\begin{algorithm}[H]
\small
\caption {$\ddwa(G,\mathbf{c},u,v,G_1)$}
\If {$u$ and $v$ are adjacent}
{
$\cjeden(u)\leftarrow\cjeden(u)\cdot\mathbf{c}(u^1,v^0)$\\
$\cjeden(v)\leftarrow\cjeden(v)\cdot\mathbf{c}(u^0,v^1)$\\
$\czero(uv)\leftarrow\mathbf{c}(u^0,v^0)$\\
$G\leftarrow G-(V(G_1)-\{u,v\})$\\
}
\Else
{
\If {$\mathbf{c}(u^0,v^0)\cdot\mathbf{c}(u^1,v^1)=\mathbf{c}(u^0,v^1)\cdot\mathbf{c}(u^1,v^0)$}
{
$\cjeden(u)\leftarrow\cjeden(u)\cdot\mathbf{c}(u^1,v^1)$\\
$\czero(u)\leftarrow\czero(u)\cdot\mathbf{c}(u^0,v^1)$\\
$\czero(v)\leftarrow\czero(v)\cdot{\mathbf{c}(u^1,v^0)}/{\mathbf{c}(u^1,v^1)}$\\
$G\leftarrow G-(V(G_1)-\{u,v\})$\\
}
\Else 
{
Create a new vertex $x$\\
$\czero(x)\leftarrow\mathbf{c}(u^1,v^1)$\\
$\czero(ux)\leftarrow{\mathbf{c}(u^0,v^1)}/{\mathbf{c}(u^1,v^1)}$\\
$\czero(vx)\leftarrow{\mathbf{c}(u^1,v^0)}/{\mathbf{c}(u^1,v^1)}$\\
$\cjeden(x)\leftarrow{[\mathbf{c}(u^0,v^0)\mathbf{c}(u^1,v^1)-\mathbf{c}(u^0,v^1)\mathbf{c}(u^1,v^0))]}/{\mathbf{c}(u^1,v^1)}$\\
$G\leftarrow G-(V(G_1)-\{u,v\})$\\
$V(G)\leftarrow V(G)\cup\{x\}$, $\text{ }$ 
$E(G)\leftarrow E(G)\cup\{ux,vx\}$\\
}
}
\Return{$(G,\mathbf{c})$}
\end{algorithm}



After applying these procedures to a subcubic graph we obtain a graph $G$ such that $B(G)$ is $3$-regular and Theorem 1 can be applied.
\begin{lem} \label{proccor}
If $(G',\mathbf{c'})$ is a graph and its cardinality function obtained by applying any of procedures $\propagation$, $\reduction$, $\dzero$, $\djeden$ or $\ddwa$ to a graph $G$ and its proper cardinality function $\mathbf{c}$ then $\mathbf{c'}$ is a proper cardinality function of $G'$ and $\mathbf{c'}(G')=\mathbf{c}(G)$.
\end{lem}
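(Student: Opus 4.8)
The plan is to prove the statement procedure by procedure: each of the five procedures changes $(G,\mathbf{c})$ only locally, altering finitely many values of the cardinality function and deleting or inserting a handful of vertices and edges, so for each one I would separately verify the value identity and the three properness conditions. For the value identity the uniform device is to exhibit an explicit correspondence between $IS(G)$ and $IS(G')$ and to check that the weight updates reproduce, term by term, exactly the factors that this correspondence creates or destroys. Concretely, I would show that $\mathbf{c'}_{G'}(S')$ equals the sum of $\mathbf{c}_G(S)$ over all independent sets $S$ of $G$ that restrict to $S'$, and then sum over $S'$; this immediately gives $\mathbf{c'}(G')=\mathbf{c}(G)$ for $\reduction$, $\dzero$, $\djeden$ and $\ddwa$. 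For $\propagation$ the invariant that is actually established is $\mathbf{c'}(G')=\mathbf{c}(G,v^\eta)$, the restricted count singled out by the extra argument $\eta$, matching the way the procedure is invoked when computing $\mathbf{c}(G_1,v^\eta)$; I would prove that variant in exactly the same style.

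For $\reduction$ I treat cases (R1) and (R2). In (R1) the deleted vertex $v$ is isolated, so each $S'\in IS(G')$ lifts to precisely the two sets $S'$ and $S'\cup\{v\}$, whose $G$-weights share a common factor and differ only by $\czero(v)$ versus $\cjeden(v)$; absorbing the factor $\czero(v)+\cjeden(v)$ into both $\czero(u)$ and $\cjeden(u)$, so that it is picked up exactly once whether or not $u\in S'$, yields the identity. In (R2) the deleted vertex $v$ is a leaf with neighbour $u$, and I partition $IS(G')$ by whether $u\in S'$: when $u\notin S'$ both liftings survive and contribute the factor $\cjeden(v)+\czero(v)\cdot\czero(uv)$, which is absorbed into $\czero(u)$, whereas when $u\in S'$ only $S'$ lifts and contributes $\czero(v)$, absorbed into $\cjeden(u)$. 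The $\propagation$ cases are analogous bookkeeping: for $\eta=0$ one deletes $v$ and pushes each incident edge weight $\czero(uv)$ onto the neighbour $u$, while for $\eta=1$ one deletes $N[v]$, accumulates into the scalar $c$ the weight $\cjeden(v)$ together with the $\czero$-values of the neighbours and of the edges inside $N(v)$, pushes the weights of edges leaving $N(v)$ onto their outer endpoints, and finally multiplies $c$ into an arbitrary surviving vertex.

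The three connectivity reductions are handled by the same factorisation principle applied across the cut. For $\dzero$, $IS(G)$ is the product $IS(G_1)\times IS(G-V(G_1))$, so $\mathbf{c}(G)=\mathbf{c}(G_1)\cdot\mathbf{c}(G-V(G_1))$, and multiplying the scalar $\mathbf{c}(G_1)$ into both weights of one retained vertex reproduces it. For $\djeden$ with cut vertex $v$, every independent set factors into a $G_1$-part and a part in $G-(V(G_1)-\{v\})$ that agree on $v$; replacing $\cjeden(v),\czero(v)$ by the restricted sums $\mathbf{c}(G_1,v^1),\mathbf{c}(G_1,v^0)$, which already include the old weight of $v$, records the entire $G_1$-side, and the key bookkeeping point is that the shared vertex $v$ and the edges incident to it must be counted exactly once. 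For $\ddwa$ with $2$-cut $\{u,v\}$ I expand $\mathbf{c}(G)$ as a sum over the joint states $(\zeta,\eta)$ of $(u,v)$, factor each term into a $G_2$-weight times the normalised $G_1$-factor $\mathbf{c}(u^\zeta,v^\eta)$, and verify that the three sub-cases reconstruct exactly these factors: in the adjacent case only the states $(1,0),(0,1),(0,0)$ occur and are stored in $\cjeden(u),\cjeden(v),\czero(uv)$; in the non-adjacent rank-one case, detected by the determinant condition $\mathbf{c}(u^0,v^0)\mathbf{c}(u^1,v^1)=\mathbf{c}(u^0,v^1)\mathbf{c}(u^1,v^0)$, the four factors separate into a $u$-part and a $v$-part and are absorbed into $\cjeden(u),\czero(u),\czero(v)$; and in the remaining generic case I introduce the new vertex $x$ adjacent to $u,v$ and check, by summing the contribution of $x$ over its two states for each $(\zeta,\eta)$, that the chosen $\czero(x),\czero(ux),\czero(vx),\cjeden(x)$ reproduce all four values.

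Properness is verified alongside each case: since only finitely many weights change, I re-check conditions (1)--(3) at the affected vertices and edges. All multiplicative updates use strictly positive factors, because the old weights are positive by hypothesis and the restricted sums such as $\mathbf{c}(G_1,u^\zeta,v^\eta)$ are positive, being sums of products of positive weights over nonempty families (for instance $\{u,v\}\in IS(G_1,u^1,v^1)$ when $u,v$ are non-adjacent); hence positivity of $\czero$ and, where required, of $\cjeden$ is preserved, and condition (3) is preserved because it too is multiplied by a positive factor. The main obstacle is the generic sub-case of $\ddwa$: here I must verify algebraically that the $x$-summation reconstructs the four joint factors simultaneously, that the divisions by $\mathbf{c}(u^1,v^1)$ are legitimate (this is exactly where $\mathbf{c}(u^1,v^1)\neq 0$ and the strict determinant inequality are used, the latter guaranteeing $\cjeden(x)\neq 0$ as demanded of a cardinality function), and that properness survives even though $\cjeden(x)$ may be negative, which is permitted precisely because $x\in A(G)$, with condition (3) at $x$ still holding since the quantity $\cjeden(x)+\czero(x)\czero(ux)\czero(vx)$ equals the positive value $\mathbf{c}(u^0,v^0)$.
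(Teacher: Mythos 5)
Your proposal is correct and takes essentially the same route as the paper's appendix proof: a procedure-by-procedure verification via explicit lifting/factorisation correspondences between $IS(G)$ and $IS(G')$, the identical three-way case split for $\ddwa$ (adjacent; non-adjacent rank-one via the determinant condition; generic with the new vertex $x$, checked by summing over the two states of $x$ for each joint state of $u,v$), and the same necessary reading of the $\propagation$ invariant as $\mathbf{c'}(G')=\mathbf{c}(G,v^\eta)$, which is what the paper's computations actually establish. You in fact go slightly beyond the paper, whose written proof checks only the value identities and leaves properness unargued; the one imprecision on your side is the claim that restricted sums such as $\mathbf{c}(G_1,u^\zeta,v^\eta)$ are positive ``being sums of products of positive weights'' --- when $G_1$ contains vertices previously added by $\ddwa$ (with possibly negative $\cjeden$), termwise positivity fails and one needs the grouping-over-states argument via condition (3) that you already invoke for $x$.
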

\section{Algorithm $\iscount$}\label{setiscount}
Our main algorithm $\iscount$ takes on input a graph $G$, a proper  cardinality function $\mathbf{c}$ for $G$ and two subsets of $V(G)$ and  returns $\mathbf{c}(G)$.

In the algorithm we use the following definitions after Dahll\"of {\it et al.} \cite{dal}. 
In a graph $G$ with $\frac{2m(G)}{n(G)}=k$ the average degree of a vertex $v$ is $\frac{\alpha(v)}{\beta(v)}$, where
$\alpha(v)=d(v)+|\{w\in N(v):d(w)<k\}|$, $\beta(v)=1+\sum_{\{w\in N(v):d(w)<k\}}\frac{1}{d(w)}$. We will use a parameter $\delta$ in our algorithm and we fix it to $0.00001$. Moreover we fix the parameter $\varepsilon$ used in Monien and Preis to $\frac{5}{6}\delta$.
 The number $n_\varepsilon$ used in line \ref{linemale} of the algorithm is defined in Theorem \ref{ecut}.
\begin{algorithm}
\small
\caption {$\iscount(G,\mathbf{c},V_0,V_1)$}
$(G,\mathbf{c})\leftarrow\reduction(G,\mathbf{c})$\\
\lIf {$G$ is empty} {\Return $1$}\\
\lIf {$G$ consists of only one vertex $v$} {\Return $\cjeden(v)+\czero(v)$}\\
\If {$V_0=\emptyset$ or $V_1=\emptyset$ }
{
\While {$G$ is disconnected  \label{lined0}}
{
$(G,\mathbf{c})\leftarrow\dzero(G,\mathbf{c},H)$ where $H$ is the component of $G$ with the smallest $n_{\ge3}(H)$\hfill \textbf{(D0)}
}
\While {$G$ has a cut vertex $v$ \label{lined1}}
{
$(G,\mathbf{c})\leftarrow\djeden(G,\mathbf{c},v,G[V(H)\cup\{v\}])$ where  $H$ is the component of $G-v$ with the smallest $n_{\ge3}(H)$\hfill \textbf{(D1)}
}
\While {$G$ has a cut set $\{u,v\}$ such that $G-\{u,v\}$ has at least two components having vertices of degree $3$ in $G$ 
\label{lined2}}
{
$(G,\mathbf{c})\leftarrow\ddwa(G,\mathbf{c},u,v,G[V(H)\cup\{u,v\}])$ where $H$ is the component of $G-\{u,v\}$ with the smallest $n_{\ge3}(H)>0$\hfill \textbf{(D2)}
}
}
\If {$n_{\ge3}(G)\le n_\varepsilon$} {let $v$ be a vertex of degree $\Delta(G)$ such that $\cjeden(v)>0$\label{linemale} 
}
\Else
{
\If {$\Delta(G)=3$ and there is no vertex of degree 3 with all neighbors of degree 3}
{
{
$V_0\leftarrow V_0\cap V(B(G))$, $\text{ }$  $V_1\leftarrow V_1\cap V(B(G))$\\
\If {$V_0=\emptyset$ or $V_1=\emptyset$} {let $V_0,V_1$ be a bisection of $B(G)$ found by Monien and Preis algorithm}
\ForEach {vertex $v\in V_i$ $(i\in \{0,1\})$ with $3$ topological neighbours of degree $3$ in $V_{1-i}$ or topologically self-adjacent and with one topological neighbour of degree $3$ in $V_{1-i}$  \label{threeedges}}
{
$V_i\leftarrow V_i-\{v\}$, $\text{ }$  $V_{1-i}\leftarrow V_{1-i}\cup\{v\}$ \label{threeedges2}
}
\If {there exists  and a vertex in $V_i$ $(i\in \{0,1\})$ with $2$ topological neighbours of degree $3$ in $V_{1-i}$ or with $1$ topological neighbour adjacent by two paths\label{twoedges}}
{
let $v$ be any such vertex\label{twoedges2}
}
\ElseIf {there exists a vertex in $V_0$ with a topological neighbour in $V_1$\label{oneedge}}
{
let $v\in V_i$ $(i\in \{0,1\})$ be a vertex with a topological neigbour in $V_{1-i}$, where $|V_i|\ge|V_{1-i}|$ and $i\in \{0,1\}$\label{oneedge2}
}
\Else {\Return $\iscount(G,\mathbf{c}, \emptyset,\emptyset)$\label{zeroedge}}
}
}
\ElseIf {$\Delta(G)=3$}
{
let $v$ be a vertex of degree $3$ with all neighbors of degree 3 \label{S9}
}
\ElseIf {$\Delta(G)=4$}
{
let $v$ be a vertex of degree $4$ with maximum $\frac{\alpha(v)}{\beta(v)}$
}
\Else 
{
let $v$  be a vertex of degree $\Delta(G)$, which if possible does not have only
neighbours of degree $\Delta(G)$
}
}
{
\Return $\iscount(\reduction(\propagation(G,\mathbf{c},v,1)),V_0,V_1)+\iscount(\reduction(\propagation(G),\mathbf{c},v,0)),V_0,V_1)$
\hfill \textbf{(B)}\\
}
\end{algorithm}

Notice that when line \ref{linemale} is executed then there exists desired vertex $v$. This follows from the fact that $\cjeden(v)\le 0 $ holds only for vertices added by the procedure $\ddwa$ and each such vertex has two neighbors, both not added by $\ddwa$ and hence with positive value of the function $\cjeden$.

\begin{tw} \label{tcountthreecom}
The algorithm $\iscount$ applied to a subcubic graph $G$ runs in time $O^*(\ctrzy)$, where $n$ is the number of vertices of $G$.
\end{tw}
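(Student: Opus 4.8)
The plan is a Measure and Conquer analysis in which the measure additionally records the current cut. Define $\mu(G)=n_3(G)+\omega\cdot e(V_0,V_1)$, where $n_3(G)=n_{\ge3}(G)=|V(B(G))|$ counts the degree-$3$ vertices (after $\reduction$ there are no vertices of degree $<2$, so degree-$2$ vertices are free), $e(V_0,V_1)$ is the number of edges of $B(G)$ crossing the bisection carried in the arguments $V_0,V_1$ (with $e=0$ when no bisection is active), and $\omega>0$ is a weight to be fixed, subject to $\omega(\tfrac16+\varepsilon)<1$. The first step is to bound the initial measure: each time a fresh bisection is installed, Theorem~\ref{ecut} applied to the $3$-regular graph $B(G)$ gives $e(V_0,V_1)\le(\tfrac16+\varepsilon)n_3$, so $\mu(G)\le n\bigl(1+\omega(\tfrac16+\varepsilon)\bigr)$. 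Hence a bound of the form $T\le c^{\mu}$ becomes $T\le\bigl(c^{\,1+\omega(1/6+\varepsilon)}\bigr)^{n}$, and it suffices to keep the base $c^{\,1+\omega(1/6+\varepsilon)}$ at most $1.1394$.

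Next I would clear away the non-branching work. Lemma~\ref{proccor} already gives correctness and properness, so for the running time I only need that $\reduction,\propagation,\dzero,\djeden,\ddwa$ run in polynomial time---the recursive $\iscount$ calls hidden inside $\dzero,\djeden,\ddwa$ being charged as independent subproblems---and, crucially, that none of them raises $\mu$. Deleting isolated vertices and leaves only removes vertices and lowers degrees, while the rebalancing moves in lines~\ref{threeedges}--\ref{threeedges2} strictly lower $e(V_0,V_1)$ and leave $n_3$ unchanged; all of these can only decrease $\mu$. The base case $n_{\ge3}(G)\le n_\varepsilon$ of line~\ref{linemale} has $\mu=O(1)$ and costs $O^*(1)$, so it is absorbed into the polynomial factor.

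The core is then a case analysis of the branching rule \textbf{(B)}, each case yielding either a two-way recurrence $c^{\mu}\ge c^{\mu-a}+c^{\mu-b}$ or the conservative split $c^{\mu}\ge c^{\mu_1}+c^{\mu_2}$. I split into: (i) the dense case (line~\ref{S9}), a degree-$3$ vertex $v$ whose neighbours all have degree $3$, where $\propagation(v,1)$ deletes $N[v]$ and the cascade of $\reduction$ steps forces a large, strongly asymmetric drop of $n_3$, giving a favourable base $c_0$ that does not involve $\omega$; (ii) the spread-out case, where every degree-$3$ vertex has a degree-$2$ neighbour, so no two are adjacent, every edge of the $3$-regular graph $B(G)$ is a topological path, and Theorem~\ref{ecut} supplies a bisection. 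In (ii) the worst configuration (topological paths of minimum length) makes the plain $n_3$-drop only roughly $4$ in each branch, a symmetric branch whose base (near $2^{1/4}$) is the bottleneck the cut term must overcome. The rescue is that when we branch on a cut vertex carrying two (line~\ref{twoedges}) or one (line~\ref{oneedge}) crossing topological edge, each branch not only removes $v$ and smooths its topological neighbours out of $B(G)$ but also destroys every crossing edge incident to $v$ and to those neighbours; so the measure falls by the $n_3$-part \emph{plus} $\omega$ times the number of destroyed crossing edges, sharpening the base exactly where it is needed. Finally (iii), once no crossing edge remains (line~\ref{zeroedge}) the re-entered call finds $G$ disconnected across the bisection, and $\dzero,\djeden,\ddwa$ cut it into parts with at most $\lceil n_3/2\rceil$ degree-$3$ vertices; reinstalling a bisection in each part reinflates its measure by the factor $1+\omega(\tfrac16+\varepsilon)$, giving the balanced recurrence $c^{n_3}\ge 2\,c^{(n_3/2)(1+\omega(1/6+\varepsilon))}$, which is feasible for all $n_3>n_\varepsilon$ precisely because $\omega(\tfrac16+\varepsilon)<1$.

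The final step is to optimise the weight. Every recurrence has a solution $c^{\mu}$; the overall base is $c(\omega)=\max\{c_0,c_1(\omega),\dots\}$ in which the crossing-edge cases $c_1(\omega)$ \emph{decrease} with $\omega$, whereas both the $\omega$-free dense base $c_0$ and the inflation factor in $c^{\,1+\omega(1/6+\varepsilon)}$ push the $n$-base \emph{up} as $\omega$ grows; balancing these opposing pressures yields an interior optimum, and one checks (for $\delta=10^{-5}$, $\varepsilon=\tfrac56\delta$ and the corresponding $n_\varepsilon$) that the minimised $n$-base does not exceed $1.1394$, i.e. $T=O^*(\ctrzy)$. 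The step I expect to be the main obstacle is the exact bookkeeping in case (ii): pinning down, for every local configuration of a cut vertex and its incident topological paths---short paths that let $\propagation(v,1)$ swallow whole neighbours, topological self-adjacencies, and the double-path situation flagged in line~\ref{twoedges}---precisely how many degree-$3$ vertices and how many crossing edges disappear in each branch, and then verifying, most plausibly by a finite computer-assisted check over these configurations, that after the optimal $\omega$ no single recurrence exceeds the target base.
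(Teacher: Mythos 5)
Your plan follows the paper's strategy in outline (the Monien--Preis bisection of $B(G)$, a measure mixing the degree-$3$ count with a crossing-edge term, branching on vertices incident to crossing topological edges, re-entering when the cut is exhausted), but your specific measure $\mu(G)=n_3(G)+\omega\, e(V_0,V_1)$ has a genuine hole at your step (iii). The guarantee $\max\{|V_0|,|V_1|\}\le\lceil n_3/2\rceil$ holds only at the moment the bisection is \emph{installed}. Afterwards the two sides shrink at different rates, and---decisively---the rebalancing of lines \ref{threeedges}--\ref{threeedges2} moves vertices wholesale from $V_i$ to $V_{1-i}$, trading crossing edges for imbalance. Your measure has already spent the credit for those crossing-edge drops on earlier branchings, so it has nothing left to pay for the imbalance: by the time line \ref{zeroedge} fires ($ec=0$), a single component can contain up to $\max\{|V_0|,|V_1|\}$ degree-$3$ vertices, which can be far above $n_3/2$ of the \emph{current} graph (the drift from moves is bounded only by the initial cut size, i.e.\ about $(\tfrac16+\varepsilon)$ of the $n_3$ at installation time, and $n_3$ has meanwhile decreased). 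Hence your closing recurrence $c^{n_3}\ge 2\,c^{(n_3/2)(1+\omega(1/6+\varepsilon))}$ is unjustified. The paper's measure is designed precisely to close this hole: it charges $(\tfrac15+\delta)\max\{|V_0|,|V_1|\}+\tfrac35\,ec$, i.e.\ only the \emph{larger} side. Then a rebalancing move costs $(\tfrac15+\delta)$ but releases at least $3\cdot\tfrac35$ (its Case 5), and at exhaustion every component satisfies $n_3(G_i)\le\max\{|V_0|,|V_1|\}$, so re-entry is measure-non-increasing (its Case 8) and no inflation constraint of the form $\omega(\tfrac16+\varepsilon)<1$ is needed at all. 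The price of that choice is exactly the bookkeeping you flagged as the obstacle, including a two-consecutive-call analysis in the tight one-crossing-edge case when $|V_0|=|V_1|$ (its Case 7).

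A second gap: your dense case (i) is not ``$\omega$-free''. Graphs containing a degree-$3$ vertex with all neighbours of degree $3$ can become sparse during the recursion, so the induction requires the dense measure at the density-$\tfrac83$ boundary to dominate the subsequent sparse cost; with your pure-$n_3$ dense measure, the sparse subproblem costs $c^{\,n_3(1+\omega(1/6+\varepsilon))}>c^{\,n_3}$ and the induction breaks at the transition. The paper resolves this by giving degree-$2$ vertices positive weight, $\mu(G)=0.023855\,n_2(G)+0.188173\,n_3(G)$, calibrated so that at density $\tfrac83$ (where $n_3\le 2n_2$) this measure is at least the sparse bound $0.2001\,n_3$. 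This coupling---not a balance between your $c_0$ and the inflation factor---is where the constant $1.1394$ actually comes from: the bisection phase alone yields only $O^*(1.0969^n)$, and the true bottleneck is the dense case under the $n_2$-weighted measure, at $2^{0.188173}\approx 1.1393$ per vertex.
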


\noindent {\it Proof.}
The procedures $\propagation$ and  $\reduction$ are performed in polynomial time.

First we consider graphs without a vertex of degree $3$ with all neighbors of degree $3$.  From Lemma 6 in \cite{dal} if there is such  vertex
then the density does not exceed $2\frac{2}{3}$. We write $V_0(G)$ and $V_1(G)$ for sets $V_0$ and $V_1$ used in the algorithm applied to a graph $G$.

Let $\wiekszapolowa=\max\{|V_0(G)|,|V_1(G)|\}$ and $\ekat=e(V_0(G) ,V_1(G))$.

For a connected graph we use $\mu(G)=\begin{cases}(\frac{1}{5}+\delta)n_3(G)&\text{ if } V_0(G)=\emptyset\text{ or } V_1=\emptyset\\ (\frac{1}{5}+\delta)bp(G)+\frac{3}{5}ec(G) & \text{ if } V_0(G)\neq \emptyset \text { and } V_1 \neq \emptyset\\
\end {cases}$
as the measure. The measure of a disconnected graph is the sum of the measures of its components. 

Notice that if $G$ contains no vertex of degree $3$ with all neighbors of degree $3$ then in all recursive calls of the algorithm applied to $G$ there will be no such vertices.  Hence in the time complexity analysis if the density is at most $2\frac{2}{3}$ then in all recursive call it stays at most $2\frac{2}{3}$ and the measure $\mu$ defined above is applied. 

Let $T(G)$ denote the running time of the algorithm applied to a graph $G$. We prove that $T(G)\le Cn^3(G)n_3^3(G)2^{\mu(G)}$ for some constant $C$. We assume that all local operations like e.g. finding vertices $v$ or $u$, finding a component,  finding a cut-set, finding $B(G)$, etc. are performed in time $Cn^3(G)$. Consider the following cases: 

\noindent
\textbf{Case 1. (D0)} Let  $H$ be the component of $G$ chosen in line \ref{lined0} of $\iscount$ and let $G'=G-V(H)$. By  definition of $\mu$ we have $\mu(H)\le \mu(G)$ and $\mu(G')\le \mu(G)$. By induction hypothesis we have 
\[
T(G)\le T(H)+T(G')+Cn^3(G)
\le Cn^3(H)n_3^3(H)2^{\mu(H)}+Cn^3(G')n_3^3(G')2^{\mu(G')}+Cn^3(G)
\]
\[
\le Cn^3(G)(n_3^3(H)+n_3^3(G')+1)2^{\mu(G)}
\le Cn^3(G)n_3^3(G)2^{\mu(G)}.
\]
The last inequality follows from fact that $a^3+b^3+1\leq (a+b)^3$ for all natural numbers $a$, $b$.

\noindent
\textbf{Case 2. (D1)} Let $v$ be the vertex and $H$ the component chosen in line \ref{lined1} and let $G'=G-V(H)$. We have $\mu(H)\le \mu(G)$ and $\mu(G')\le \mu(G)$ and $n_3(H)\leq n_3(G')$ by the choice of $H$. 

By induction hypothesis we have
\[
T(G)\le 2T(H)+T(G')+Cn^3(G)
\]
\[
\le 2Cn^3(H)n_3^3(H)2^{\mu(H)}+Cn^3(G')n_3^3(G')2^{\mu(G')}+Cn^3(G)
\]
\[
\le Cn^3(G)\Big(2n_3^3(H)+n_3^3(G')+1\Big)2^{\mu(G)}
\le Cn^3(G)n_3^3(G)2^{\mu(G)}.
\]
The last inequality follows from the fact that $2a^3+b^3+1\leq (a+b)^3$ for all natural numbers $a\leq b$ and the fact that $n_3(H)+n_3(G')\leq n_3(G)$.

\noindent
\textbf{Case 3. (D2)} Let $v,u$ be vertices and $H$ the component chosen in  line \ref{lined2}. Let $G'$ be the graph returned by $\ddwa$. Again we have $\mu(H)\le \mu(G)$ and $\mu(G')\le \mu(G)$ and $n_3(H)\leq n_3(G')$.

By induction hypothesis we have
\[
T(G)\le 4T(H)+T(G')+Cn^3(G)\le 4Cn^3(H)n_3^3(H)2^{\mu (H)}+ Cn^3(G')n_3^3(G')2^{\mu(G'))}+Cn^3(G)
\]
\[
\le Cn^3(G)(4n_3^3(H)+n_3^3(G')+1))2^{\mu(G)}
\le Cn^3(G)n_3^3(G)2^{\mu(G)}.
\]
The last inequality follows from the fact that $4a^3+b^3+1\le (a+b)^3$ for all natural numbers $a\leq b$ and the fact that $n_3(H)+n_3(G')\leq n_3(G)$.

\noindent
\textbf{Case 4. } The vertex $v$ is chosen in line \ref{linemale}. In this  case the number of branchings is bounded by a constant and  the assertion  holds.

\noindent
\textbf{Case 5.} The vertex $v\in V_i$ is chosen in line \ref{threeedges}. If $v$ has 3 topological neighbours in $V_{1-i}$ then
\[
T(G)\le Cn^3(G)n_3^3(G)2^{(\frac{1}{5}+\delta)(\wiekszapolowa+1)+\frac{3}{5}(\ekat-3)}+Cn^3(G)\le
\]
\[
\le Cn^3(G)n_3^3(G)\cdot 2^{(\frac{1}{5}+\delta)\wiekszapolowa+\frac{3}{5}\ekat-\frac{8}{5}}+Cn^3(G)<Cn^3(G)n_3^3(G)\cdot 2^{(\frac{1}{5}+\delta)\wiekszapolowa+\frac{3}{5}\ekat}.
\]

If $v\in V_i$ is topologically self-adjacent and has one neighbour in $V_{1-i}$ then  
\[
T(G)\le Cn^3(G)n_3^3(G)2^{(\frac{1}{5}+\delta)(\wiekszapolowa+1)+\frac{3}{5}(\ekat-1)}+Cn^3(G)\le
\]
\[
\le Cn^3(G)n_3^3(G)\cdot 2^{(\frac{1}{5}+\delta)\wiekszapolowa+\frac{3}{5}\ekat-\frac{2}{5}}+Cn^3(G)<Cn^3(G)n_3^3(G)\cdot 2^{(\frac{1}{5}+\delta)\wiekszapolowa+\frac{3}{5}\ekat}.
\]
\noindent
\textbf{Case 6.}
The vertex $v\in V_i$ $(i\in \{0,1\})$ is chosen in line \ref{twoedges} and has $2$ topological neighbours of degree $3$ or $1$ topological neighbour of degree $3$ adjacent by two paths in $V_{1-i}$. The vertex $v$ is removed from the graph and all its topological neighbours are removed or become vertices of degree $2$ thanks to the $\reduction$ procedure. If the vertex $v$ is adjacent to a vertex of degree $3$ (say $u$) by two paths, then $u$ is removed and its topological neighbour other than $v$  becomes of degree $2$. In both branches the number of vertices of degree $3$ is reduced by at least $2$ in both $V_0$ and $V_1$. We have
\[
T(G)\le 2Cn^3(G)(n_3^3(G)-4)2^{(\frac{1}{5}+\delta)(\wiekszapolowa-2)+\frac{3}{5}(\ekat-2)}+Cn^3(G)\le
\]
\[
\le Cn^3(G)n_3^3(G)2^{(\frac{1}{5}+\delta)\wiekszapolowa+\frac{3}{5}\ekat+1-\frac{2}{5}-\frac{6}{5}}<Cn^3(G)n_3^3(G)2^{(\frac{1}{5}+\delta)\wiekszapolowa+\frac{3}{5}\ekat}.
\]

\noindent
\textbf{Case 7.}
The vertex $v\in V_i$ ($i\in \{0,1\}$, $|V_i|\ge|V_{1-i}|$) for branching is chosen in line \ref{oneedge} and has  one topological neighbour in $V_{1-i}$. It is possible that after the branching in both branches the number of vertices of degree $3$ will decrease by at least $2$ in both $V_i$ and $V_{1-i}$ (e.g. in case when $v$ has only one topological neighbour in $u\in V_i$ adjacent by two paths and the topological neighbour of $u$ other than $v$ belongs to $V_{1-i}$). Then in both branches the number of vertices of degree three in $V_0$ and $V_1$ is reduced by at least $2$ and the case analysis is the same as in the previous one. 

So now we assume that in both branches the number of vertices of degree $3$ is reduced by at least $3$ in $V_i$ and by at least $1$ in $V_{1-i}$. 
First we consider the subcase when $|V_i|>|V_{1-i}|$.  The smallest decrease of the measure occurs in the case when $|V_i|=|V_{1-i}|+1$ and $|V_i(G')|=|V_i|-3$ and $|V_{1-i}(G')|=|V_{1-i}|-1=|V_i|-2$, where $G'$ is a graph obtained in any branch. 
\[
T(G)\le 2Cn^3(G)(n_3^3(G)-4)2^{(\frac{1}{5}+\delta)(\wiekszapolowa-2)+\frac{3}{5}(\ekat-1)}+Cn^3(G)\le
\]
\[ 
\le Cn^3(G)n_3^3(G)2^{(\frac{1}{5}+\delta)\wiekszapolowa+\frac{3}{5}\ekat+1-\frac{2}{5}-\frac{3}{5}}= Cn^3(G)n_3^3(G)2^{(\frac{1}{5}+\delta)\wiekszapolowa+\frac{3}{5}\ekat}.
\]

Now consider the subcase when $|V_0|=|V_1|$. In this subcase we need to take into account two consecutive recursive calls in the analysis. If in the second recursive call lines \ref{threeedges}-\ref{threeedges2} are executed then the second condition in line \ref{threeedges} holds. The number of vertices of degree $3$ in $V_i$ is reduced by at least $3-1=2$ and in $V_{1-i}$ also by at least $2$ and $ec(G)$ decreases by at least $2$ in both branches. Thus 
\[
T(G)\le 2Cn^3(G)(n_3^3(G)-4)2^{(\frac{1}{5}+\delta)(\wiekszapolowa-2)+\frac{3}{5}(\ekat-2)}+Cn^3(G)\le
\]
\[
\le Cn^3(G)n_3^3(G)2^{(\frac{1}{5}+\delta)\wiekszapolowa+\frac{3}{5}\ekat+1-\frac{2}{5}-\frac{6}{5}}= Cn^3(G)n_3^3(G)2^{(\frac{1}{5}+\delta)\wiekszapolowa+\frac{3}{5}\ekat}.
\]
Now, consider the subcase when in the second recursive call lines \ref{twoedges}-\ref{twoedges2} are executed. In this case the number of vertices of degree $3$ is reduced by at least $3+2=5$ in $V_i$ and by at least $1+2=3$ in $V_{1-i}$ and $ec(G)$ decreases by at least $3$ in each branch. Thus 
\[
T(G)\le 4Cn^3(G)(n_3^3(G)-8)2^{(\frac{1}{5}+\delta)(\wiekszapolowa-3)+\frac{3}{5}(\ekat-3)}+Cn^3(G)\le
\]
\[
\le Cn^3(G)n_3^3(G)2^{(\frac{1}{5}+\delta)\wiekszapolowa+\frac{3}{5}\ekat+2-\frac{3}{5}-\frac{9}{5}}<Cn^3(G)n_3^3(G)2^{(\frac{1}{5}+\delta)\wiekszapolowa+\frac{3}{5}\ekat}.
\]
Finally, the last subcase is when in the second recursive call lines \ref{oneedge}-\ref{oneedge2} are executed. The number of vertices of degree $3$ is reduced by at least $3+1=1+3=4$ in both $V_0$ and $V_1$ and $ec(G)$ decreases by at least $2$ in each branch. 
\[
T(G)\le 4Cn^3(G)(n_3^3(G)-8)2^{(\frac{1}{5}+\delta)(\wiekszapolowa-4)+\frac{3}{5}(\ekat-2)}+Cn^3(G)\le
\]
\[
\le Cn^3(G)n_3^3(G)2^{(\frac{1}{5}+\delta)\wiekszapolowa+\frac{3}{5}\ekat+2-\frac{4}{5}-\frac{6}{5}}= Cn^3(G)n_3^3(G)2^{(\frac{1}{5}+\delta)\wiekszapolowa+\frac{3}{5}\ekat}.
\]
{\bf Case 8}. The line \ref{zeroedge} is executed. This line is executed if the graph $G$ becomes disconnected and $ec(G)=0$ and $V_0\neq\emptyset$ and $V_1\neq\emptyset$. Let $G_1, \ldots, G_s$ be the components of $G$. The algorithm will perform D0 procedure $s-1$ times.  

\[
T(G)\le Cn^3(G_1)n_3^3(G_1)2^{\mu(G_1)}+\ldots+Cn^3(G_s)n_3^3(G_s)2^{\mu(G_s)}+sCn^3(G)\le
\]
\[
\le Cn^3(G_1)n_3^3(G_1)2^{(\frac{1}{5}+\delta)n_3(G_1)}+\ldots+Cn^3(G_s)n_3^3(G_s)2^{(\frac{1}{5}+\delta)n_3(G_s)}+sCn^3(G)\le
\]
\[
\le Cn^3(G)n_3^3(G_1)2^{(\frac{1}{5}+\delta)\wiekszapolowa}+\ldots+Cn^3(G)n_3^3(G_s)2^{(\frac{1}{5}+\delta)\wiekszapolowa}+sCn^3(G)\le
\]
\[
\le Cn^3(G)n_3^3(G)2^{(\frac{1}{5}+\delta)\wiekszapolowa}=Cn^3(G)n_3^3(G)2^{(\frac{1}{5}+\delta)(\wiekszapolowa)+\frac{3}{5}(\ekat)}=
\]
\[
=Cn^3(G)n_3^3(G)2^{\mu(G)}.
\]
Obviously $n_3(G_i)\le bp(G)$ for all $i\in\{1,\ldots,s\}$ and the forth inequality follows from the fact that $a_1^3+\ldots+a_s^3+s\le (a_1+\ldots+a_s)^3$ for all natural numbers $a_1, \ldots, a_s$ and $s\ge2$.
 
After finding the bisection from Theorem \ref{ecut} we have $\wiekszapolowa=\frac{n_3(G)}{2}$ and $\ekat\le0.1667n_3(G)$ hence $\mu(G)\le (\frac{1}{5}+\delta)\frac{n_3(G)}{2}+\frac{3}{5}0.1667n_3(G)\le0.2001n_3(G)$. Thus by lemma 6 in \cite{dal} we have
$T(G)=O^*(2^{0.2001n_3(G)})=O^*(2^{0.2001\cdot\frac{2}{3}n(G)}))=O^*(2^{0,1334n(G)}))=O^*(1.0969^{n(G)})$

\noindent
\textbf{Case 9.}
The vertex $v\in V_i$ $(i\in \{0,1\})$ is chosen in line \ref{S9}. In the recursive call for the graph  obtained by $\reduction(\propagation(G),\mathbf{c},v,0))$ the vertex $v$ is removed and all its neighbors have degree decreased from 3 to 2, so there number of vertices of degree 3 is decreased by $4$. In the branch $\reduction(\propagation(G),\mathbf{c},v,1))$ the vertex $v$ and all its $3$ neighbors are removed. Moreover at lest $4$ topological neighbors of neighbors of $v$  have degree decreased from 3 to 2, so there total number of vertices of degree 3 is decreased by $8$. 
 We have
\[
T(G)\le Cn^3(G)(n_3^3(G)-4)2^{(\frac{1}{5}+\delta)(n_3(G)-4)}+Cn^3(G)(n_3^3(G)-8)2^{(\frac{1}{5}+\delta)(n_3(G)-8)}+Cn^3(G)\le
\]
\[
\le Cn^3(G)n_3^3(G)2^{(\frac{1}{5}+\delta)n_3(G)}(2^{-4(\frac{1}{5}+\delta)}+2^{-8(\frac{1}{5}+\delta)})<Cn^3(G)n_3^3(G)2^{(\frac{1}{5}+\delta)n_3(G)}.
\]

For graphs with the density greater than $2\frac{2}{3}$ we use $\mu(G)=0.023855n_2(G)+0.188173n_3(G)$ as a measure and obtain the  complexity $O^*(1.13932^n)$. If $G$ is a graph with the density greater than $2\frac{2}{3}$ then $n_3(G)>\frac{2}{3}n(G)$ and  by Lemma 6 in \cite{dal} $G$ contains a vertex $v$ of degree $3$ with all neighbors of degree $3$. Thus the measure of $G$ is greater than $0.023855\cdot\frac{1}{3}n(G)+0.188173\cdot\frac{2}{3}n(G)>0.1334n(G)$ and the measure of a graph never increases during the course of the algorithm. 
$\hfill\Box$
Following  Wahlstr\"om's method form \cite{walstrom} we obtain 
\begin{tw} \label{tcount}
The algorithm $\iscount$ runs in time $O^*(\cogol)$, where $n$ is the number of vertices of the input graph. \end{tw}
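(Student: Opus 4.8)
The plan is to reuse Wahlstr\"om's Measure and Conquer analysis of the general-graph branching almost verbatim, replacing only the subcubic subroutine by our algorithm and Theorem~\ref{tcountthreecom}. For graphs of maximum degree at least $4$ the algorithm $\iscount$ branches by rule (B) on exactly the vertices prescribed by Wahlstr\"om (a degree-$4$ vertex maximising $\frac{\alpha(v)}{\beta(v)}$, or a maximum-degree vertex avoiding, when possible, a closed neighbourhood consisting only of maximum-degree vertices), so the branching structure is identical. I would therefore take the same measure $\mu(G)=\sum_{v\in V(G)}w_{d(v)}$ with non-decreasing weights $w_i$, where $w_0=w_1=0$ because isolated vertices and leaves are removed in polynomial time by $\reduction$, and where for degrees exceeding $3$ the effective weight is made to depend on the density $\frac{2m(G)}{n(G)}$, so that denser instances are charged a larger measure release. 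The running time is then $O^*(2^{\mu(G)})$, and the goal is to choose the weights so that $2^{\mu(G)}\le\cogol$ in the worst case.

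The central point — and the reason any subcubic improvement yields a general improvement — is that the subcubic regime governs the worst case, while for $\Delta(G)\ge 4$ the branching is efficient: deleting a high-degree closed neighbourhood in the ``include $v$'' branch of (B) releases a large amount of measure, and the exclusion branch drops the degrees of all $d(v)$ neighbours. I would reuse Wahlstr\"om's branching inequalities for $\Delta(G)=4$ and for $\Delta(G)\ge 5$ unchanged: for $\Delta(G)=4$ the quantities $\alpha(v),\beta(v)$ encode precisely the guaranteed weight release and bound the branching number by the density-dependent choice, and for $\Delta(G)\ge5$ the released weight is large enough that the recurrence is comfortably below the target. For $\Delta(G)=3$ I invoke Theorem~\ref{tcountthreecom} directly, so that the subcubic boundary condition on the weights is now governed by the constant $\ctrzy$ rather than by Wahlstr\"om's $1.1504^n$.

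It then remains to re-solve the resulting finite system of branching constraints and fix the weights $w_2,w_3,w_4,\ldots$ together with the density-dependence so that every branching number over all degree/density configurations is at most $1.2369$. Because our subcubic constant $1.1394$ is smaller than Wahlstr\"om's $1.1504$, the subcubic boundary constraint is slacker, which is exactly what permits the high-degree constraints to be met with the smaller overall target $\cogol$; I would verify this numerically, as is standard in Measure and Conquer. The weights $w_2,w_3$ must moreover be pinned so that when $\Delta(G)=3$ the measure $\mu$ agrees, up to the polynomial factor, with the measure used in the proof of Theorem~\ref{tcountthreecom}.

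The main obstacle will be the composition of the two regimes at the transition $\Delta(G)=4\to 3$. A single application of (B) can lower the maximum degree from $4$ to $3$, so I must check that the density-dependent measure used for dense graphs dominates the subcubic measure of Theorem~\ref{tcountthreecom} at the moment the instance becomes subcubic; otherwise $\mu$ could jump upward and the two analyses would fail to compose. Concretely this amounts to imposing compatibility inequalities relating the degree-$4$ density weights to $w_3$ and $w_2$, together with the monotonicity ensuring that $\mu$ never increases under $\reduction$, D0, D1, D2 (each of which only removes vertices or, in the case of D2, substitutes a component by a single vertex of degree at most $2$) and under the degree drops induced by branching. Once these compatibility and monotonicity conditions are verified and the worst-case branching number is confirmed to be at most $1.2369$, we conclude $T(G)=O^*(2^{\mu(G)})=O^*(\cogol)$, which proves the theorem.
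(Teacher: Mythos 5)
Your proposal is correct and follows essentially the same route as the paper's own (sketched) proof: both adapt Wahlström's Measure and Conquer analysis wholesale, take Theorem~\ref{tcountthreecom} as the subcubic base case, use density-dependent weights $w_{i,2},w_{i,3},w_{i,4}$ for the $\Delta(G)=4$ regime (with fixed weights up to degree $6$ and plain vertex count for $\Delta(G)\ge 7$), and verify the branching numbers numerically. The compatibility condition you flag at the $\Delta(G)=4\to 3$ transition is handled in the paper exactly as you propose, by pinning the first row of Table~\ref{wmu5table} to coincide with the dense-subcubic measure $0.023855\,n_2(G)+0.188173\,n_3(G)$ used at the end of the proof of Theorem~\ref{tcountthreecom}.
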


\begin{proof}[Skech]
The proof is analogous to the proof of Wahlstr\"om in \cite{walstrom} and is based on Measure and Conquer method. For subcubic graphs it follows from Theorem \ref{tcountthreecom}. For graphs with the maximum degree $4$ $\iscount$ runs in time $O^*(1.2075^n)$. In this case we use a measure depending on the density of a graph and the number of vertices of degree $2$, $3$ and $4$. The measure is $\mu(G)=w_{i,2}n_2(G)+w_{i,3}n_3(G)+w_{i,4}n_4(G)$ for $i$ such that $\frac{2m(G)}{n(G)}\in (p_i;p_{i+1}]$ with the values given in Table \ref{wmu5table}:
\begin{table}[h]
	\centering
	\renewcommand{\arraystretch}{1.2}
		\begin{tabular}{|c|c|c|c|c|c|}
			\hline
			$i$ & $p_{i-1}$ & $p_i$ & $w_{i,2}$ & $w_{i,3}$  & $w_{i,4}$\\ \hline
			$1$ & $2$       & $3$          & $0.023855$ & $0.188173$ & $0.331455$ \\
			$2$ & $3$ & $3\frac{1}{5}$  & $0.068596$ & $0.188173$ & $0.286715$ \\
			$3$ & $3\frac{1}{5}$ & $3\frac{13}{21}$ & $0.081402$ & $0.190308$ & $0.278178$ \\
			$4$ & $3\frac{13}{21}$ & $3\frac{3}{4}$ & $0.093788$ & $0.194436$ & $0.27405$ \\
			$5$ & $3\frac{3}{4}$ & $4$ & $0.108682$ & $0.20082$ & $0.271922$ \\			\hline
		\end{tabular}
		\caption{}
		\label{wmu5table}
\end{table}

Then we show that the algorithm $\iscount$ applied to a graph $G$ with the maximum degree at most $6$ runs in time  $O^*(\cogol)$. In this case the measure defined by $\mu(G)=\sum_{i=2}^6 w_in_i(G)$ with the values given in Table \ref{wmu7table}. 
\begin{table}[h]
	\centering
		\begin{tabular}{|c|c|c|c|c|}
			\hline
			$w_2$ & $w_3$ & $w_4$ & $w_5$ & $w_6$ \\ \hline
      $0.113664$ & $0.200821$ & $0.27194$ & $0.298566$ & $0.30669$ \\
			\hline
		\end{tabular}
		\caption{}
		\label{wmu7table}
\end{table}
For graphs with the maximum degree at least $7$ it is enough to consider the number of vertices as a measure.    $\hfill\Box$

\end{proof}
\begin{tw}[Bj\"orklund, Husfeldt,  Koivisto \cite{bh}]
If independent sets can be counted in a given graph $G$ on $n$ vertices in time $O^*(c^n)$ and polynomial space then the chromatic number of $G$ can be found in time $O^*((1+c)^n)$ and polynomial space. 
\end{tw}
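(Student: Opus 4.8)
The plan is to reconstruct the inclusion--exclusion argument of Bj\"orklund \emph{et al.}, using the hypothesized counter purely as a black box. Write $s(H)$ for the number of independent sets of a graph $H$; this is exactly $\mathbf{c}(H)$ for the unit cardinality function, so by assumption $s(H)$ is computable in time $O^*(c^{\,n(H)})$ and polynomial space. The starting observation is that $G$ admits a proper colouring with $k$ colours if and only if $V(G)$ can be covered by $k$ independent sets: a proper $k$-colouring yields such a cover directly, and conversely any cover by $k$ independent sets gives a proper colouring with at most $k$ colours by assigning each vertex to one set containing it. Hence $\chi(G)$ is the least $k$ for which an ordered $k$-tuple of independent sets covering $V(G)$ exists.

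First I would count these covers exactly. Let $a_k$ be the number of ordered tuples $(I_1,\dots,I_k)$ of independent sets of $G$ with $I_1\cup\dots\cup I_k=V(G)$. For a fixed $X\subseteq V(G)$, the number of tuples in which every $I_j$ avoids $X$ is precisely $s(G-X)^k$, since an independent set of $G$ avoiding $X$ is the same thing as an independent set of $G-X$. Inclusion--exclusion over the set of uncovered vertices then gives
\[
a_k=\sum_{X\subseteq V(G)}(-1)^{|X|}\,s(G-X)^k,
\]
and by the preceding paragraph $\chi(G)=\min\{k:a_k>0\}$.

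Next I would turn this identity into an algorithm: for each $k\in\{1,\dots,n\}$ I enumerate all $2^n$ subsets $X\subseteq V(G)$, call the counting subroutine to obtain $s(G-X)$, accumulate the signed term $(-1)^{|X|}s(G-X)^k$ into $a_k$, and stop at the first $k$ with $a_k>0$. The running time for a single $k$ is dominated by $\sum_{X}O^*(c^{\,n-|X|})=O^*\!\bigl(\sum_{j=0}^{n}\binom{n}{j}c^{\,n-j}\bigr)=O^*((1+c)^n)$ by the binomial theorem, while the outer loop over the $n$ candidate values of $k$ contributes only a polynomial factor absorbed by $O^*$. For the space bound, the subsets $X$ are generated one at a time and only the running integer $a_k$ is retained; since $s(G-X)\le 2^n$, all quantities have $O(nk)=O(n^2)$ bits, so the big-integer arithmetic stays polynomial and the subroutine itself uses polynomial space by hypothesis.

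The only delicate point is the correctness of the sign test: because the sum defining $a_k$ has cancelling terms, one must use exact integer arithmetic to decide $a_k>0$ rather than floating point. This is harmless here, as the subroutine returns exact integers and the partial sums never exceed $O(n^2)$ bits, so exactness costs only polynomial overhead. With that in hand the bound $O^*((1+c)^n)$ in polynomial space follows, which matches the claimed $O^*(2.2369^n)$ colouring complexity once the $O^*(\cogol)$ counter of Theorem \ref{tcount} is used as the subroutine.
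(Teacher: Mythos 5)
Your proof is correct and follows essentially the same route as the cited source: the paper states this theorem only by citation to Bj\"orklund, Husfeldt and Koivisto, and your reconstruction---characterizing $\chi(G)$ as the least $k$ with a cover of $V(G)$ by $k$ independent sets, counting ordered covers via $a_k=\sum_{X\subseteq V(G)}(-1)^{|X|}s(G-X)^k$, bounding the time by $\sum_{j=0}^{n}\binom{n}{j}c^{\,n-j}=(1+c)^n$, and keeping space polynomial by accumulating the $O(n^2)$-bit signed terms one subset at a time with exact integer arithmetic---is precisely their inclusion--exclusion argument. No gaps: the identity, the sign test, and the resource bounds are all justified as you state them.
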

\begin{cor}
The chromatic number of a graph on $n$ vertices can be found in time $O^*(2.2369^n)$ and polynomial space. 
\end{cor}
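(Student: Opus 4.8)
The plan is simply to compose the two results immediately preceding this corollary. Theorem~\ref{tcount} establishes that the algorithm $\iscount$ counts the independent sets of any $n$-vertex graph in time $O^*(\cogol)$ and polynomial space, while the theorem of Bj\"orklund, Husfeldt and Koivisto converts any polynomial-space counting routine of running time $O^*(c^n)$ into a polynomial-space algorithm computing the chromatic number in time $O^*((1+c)^n)$. So the whole proof is an instantiation of the latter with the former supplying the subroutine.

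First I would set $c=1.2369$, which is exactly the base of the exponential guaranteed by Theorem~\ref{tcount}. Using $\iscount$ as the independent-set counting subroutine inside the Bj\"orklund--Husfeldt--Koivisto inclusion--exclusion scheme then yields a chromatic-number algorithm whose running time is $O^*((1+1.2369)^n)=O^*(2.2369^n)$. The polynomial-space bound is inherited: the inclusion--exclusion procedure of Bj\"orklund \emph{et al.} itself runs in polynomial space and invokes the counting subroutine only on (induced) subgraphs of $G$, each such call costing polynomial space by Theorem~\ref{tcount}, so the overall space usage stays polynomial.

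There is essentially no obstacle beyond bookkeeping. The only points that need confirming are that the polynomial overhead introduced by the reduction is absorbed by the $O^*$ notation (which by definition suppresses polynomial factors) and that the base is combined correctly, namely $1+1.2369=2.2369$, so that the exponential base of the colouring algorithm is precisely the sum of $1$ and the counting base. With these verified, the corollary follows at once from the two cited theorems.
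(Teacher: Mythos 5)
Your proposal is correct and matches the paper exactly: the corollary is obtained by instantiating the Bj\"orklund--Husfeldt--Koivisto theorem with $c=1.2369$ from Theorem~\ref{tcount}, giving $O^*((1+1.2369)^n)=O^*(2.2369^n)$ in polynomial space. Nothing further is needed.
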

%

\newpage

\section{Appendix}

\textbf{Lemma \ref{proccor}} If $(G',\mathbf{c'})$ is a graph and its cardinality function obtained by applying any of procedures $\propagation$, $\reduction$, $\dzero$, $\djeden$ or $\ddwa$ to a graph $G$ and its proper cardinality function $\mathbf{c}$ then $\mathbf{c'}$ is a proper cardinality function of $G'$ and $\mathbf{c'}(G')=\mathbf{c}(G)$.

\begin{proof}

$\reduction$\\
(R1)\\

We have $IS(G,v^0)=\{S:S\in IS(G')\}$ and $IS(G,v^1)=\{S\cup\{v\}:S\in IS(G')\}$.
\[
\mathbf{c}(G)=\sum\limits_{S\in IS(G,v^0)}\mathbf{c_G}(S)+\sum\limits_{S\in IS(G,v^1)}\mathbf{c_G}(S)=
\]
\[
=\czero(v)\cdot\sum\limits_{S\in IS(G-v)}\mathbf{c}_{G-v}(S)+\cjeden(v)\cdot\sum\limits_{S\in IS(G-v)}\mathbf{c}_{G-v}(S)=
\]
\[
=(\czero(v)+\cjeden(v))\cdot\sum\limits_{S\in IS(G')}\mathbf{c}_{G-v}(S)=\sum\limits_{S\in IS(G')}\mathbf{c'}_{G'}(S)=\mathbf{c'}(G').
\]

(R2)\\

We have $IS(G,u^0)=\{S, S\cup\{v\}:S\in IS(G',u^0)\}$.

For any $S\in IS(G',u^0)$
\[
\mathbf{c}_G(S\cup\{v\})+\mathbf{c}_G(S)=\cjeden(v)\cdot\czero(u)\cdot\underset{t\in S}{\prod\cjeden(t)}\cdot\underset{t\notin\{u,v\}, t\notin S}{\prod\czero(t)}\cdot\underset{e\in E(G-v)}{\prod\czero(e)}+
\]
\[
+\czero(v)\cdot\czero(u)\cdot\czero(uv)\cdot\underset{t\in S}{\prod\cjeden(t)}\cdot\underset{t\notin\{u,v\}, t\notin S}{\prod\czero(t)}\cdot\underset{e\in E(G-v)}{\prod\czero(e)}=
\]
\[
=(\cjeden(v)+\czero(v)\cdot\czero(uv))\cdot\czero(u)\cdot\underset{t\in S}{\prod\cjeden(t)}\cdot\underset{t\notin\{u,v\}, t\notin S}{\prod\czero(t)}\cdot\underset{e\in E(G-v)}{\prod\czero(e)}=
\]
\[
=\mathbf{c'_0}(u)\cdot\underset{t\in S}{\prod\mathbf{c'_1}(t)}\cdot\underset{t\notin\{u,v\}, t\notin S}{\prod\mathbf{c'_0}(t)}\cdot\underset{e\in E(G-v)}{\prod\mathbf{c'_0}(e)}=\mathbf{c'}_{G'}(S).
\]

We have $IS(G,u^1)=IS(G',u^1)$.

For any $S\in IS(G,u^1)$

\[
\mathbf{c}_G(S)=\czero(v)\cdot\cjeden(u)\cdot\underset{t\ne u, t\in S}{\prod\cjeden(t)}\cdot\underset{t\ne v, t\notin S}{\prod\czero(t)}\cdot\underset{e\in E(G-v)}{\prod\czero(e)}=
\]
\[
=\mathbf{c'_1}(u)\cdot\underset{t\ne u, t\in S}{\prod\mathbf{c'_1}(t)}\cdot\underset{t\ne v, t\notin S}{\prod\mathbf{c'_0}(t)}\cdot\underset{e\in E(G-v)}{\prod\mathbf{c'_0}(e)}=\mathbf{c'}_{G'}(S).
\]

$ $\\
$\propagation$\\

%
%

In case $\eta=0$ we have $IS(G')=IS(G,v^0)$. For any $S\in IS(G,v^0)$
\[
\mathbf{c}_G(S)=\prod_{t\in S}\cjeden(t)\cdot\czero(v)\cdot\underset{t\notin N[v], t\notin S}{\prod\czero(t)}\cdot\underset{t\in N(v), t\notin S}{\prod\czero(t)}\cdot\underset{t\in N(v), t\notin S}{\prod\czero(vt)}\cdot\underset{e\in E(G-v), e\cap S=\emptyset}{\prod\czero(e)}=
\]
\[
=\prod_{t\in S}\cjeden(t)\cdot c\cdot\underset{t\notin N[v], t\notin S}{\prod\czero(t)}\cdot\underset{t\in N(v), t\notin S}{\prod(\czero(t)\cdot\czero(vt))}\cdot\underset{e\in E(G-v), e\cap S=\emptyset}{\prod\czero(e)}=
\]
\[
=\prod_{t\in S}\mathbf{c'_1}(t)\cdot\underset{t\notin N[v], t\notin S}{\prod\mathbf{c'_0}(t)}\cdot\underset{t\in N(v), t\notin S}{\prod\mathbf{c'_0}(t)}\cdot\underset{e\in E(G-v), e\cap S=\emptyset}{\prod\mathbf{c'_0}(e)}=\mathbf{c'}_{G'}(S).
\]
The pre-last equality holds because $c$ is included in the first product if $x\in S$ or in the second or the third product if $x\notin S$.

Now, consider the case $\eta=1$. In this case $IS(G')=\{S-\{v\}:S\in IS(G,v^1)\}$. Let $P$ and $R$ be the sets of vertices of $G$ at distance $2$ and at least $3$ from $v$, respectively. Then $V(G)=N[v]\cup P\cup R$ and  $V(G')=P\cup R$.

For any $S\in IS(G,v^1)$
\[
\mathbf{c}_G(S)=\cjeden(v)\cdot\underset{t\in P\cup R, t\in S}{\prod\cjeden(t)}\cdot\underset{t\in N(v), t\notin S}{\prod\czero(t)}\cdot\underset{t\in P, t\notin S}{\prod\czero(t)}\cdot\underset{t\in R, t\notin S}{\prod\czero(t)}\cdot\underset{e\subset N(v)}{\prod\czero(e)}\cdot\underset{s\in N(v),t\in P, s,t\notin S}{\prod\czero(st)}\cdot\underset{e\subset P\cup R, e\cap S=\emptyset}{\prod\czero(e)}=
\]
\[
=\Big(\cjeden(v)\cdot\underset{t\in N(v), t\notin S}{\prod\czero(t)}\cdot\underset{e\subset N(v)}{\prod\czero(e)}\Big)\cdot\underset{t\in P\cup R, t\in S}{\prod\cjeden(t)}\cdot\Big(\underset{t\in P, t\notin S}{\prod\czero(t)}\cdot\underset{s\in N(v),t\in P, s,t\notin S}{\prod\czero(st)}\Big)\cdot\underset{t\in R, t\notin S}{\prod\czero(t)}\cdot\underset{e\subset P\cup R, e\cap S=\emptyset}{\prod\czero(e)}=
\]
\[
=c\cdot\underset{t\in P\cup R, t\in S}{\prod\cjeden(t)}\cdot\underset{t\in P, t\notin S}{\prod\czero(t)}\cdot\underset{t\in R, t\notin S}{\prod\czero(t)}\cdot\underset{e\subset P\cup R, e\cap S=\emptyset}{\prod\czero(e)}=
\]
\[
=\underset{t\in P\cup R, t\in S}{\prod\mathbf{c'_1}(t)}\cdot\underset{t\in P\cup R, t\notin S}{\prod\mathbf{c'_0}(t)}\cdot\underset{e\subset P\cup R, e\cap S=\emptyset}{\prod\mathbf{c'_0}(e)}=\mathbf{c'}_{G'}(S-\{v\}).
\]
The pre-last equality holds because $c$ is included in the first product if $x\in S-\{v\}$ or in the second product if $x\notin S-\{v\}$.
\bigskip

D0\\

We have $IS(G)=\{S_1\cup S_2:S_1\in IS(G_1), S_2\in IS(G_2)\}$.

\[
\mathbf{c}(G)=\mathbf{c}(G_1)\cdot\mathbf{c}(G_2)=\mathbf{c'}(G_2)=\mathbf{c'}(G').
\]

D1\\

Let $\eta\in\{0,1\}$. We have {$IS(G,v^\eta)=\{S_1\cup S_2: S_1\in IS(G_1,v^\eta),S_2\in IS(G_2,v^\eta)\}$}.

\[
\mathbf{c}(G,v^\eta)=\Big(\sum_{S_1\in IS(G_1,v^\eta)}\underset{t\in S_1,t\ne v}{\prod\cjeden(t)}\cdot\underset{t\in V(G_1)-S_1,t\ne v}{\prod\czero(t)}\cdot\underset{e\in E(G_1),e\cap S_1=\emptyset}{\prod\czero(e)}\Big)\cdot
\]
\[
\cdot\mathbf{c_\eta}(v)\cdot\Big(\sum_{S_2\in IS(G_2,v^\eta)}\underset{t\in S_2,t\ne v}{\prod\cjeden(t)}\cdot\underset{t\in V(G_2)-S_2,t\ne v}{\prod\czero(t)}\cdot\underset{e\in E(G_2),e\cap S_2=\emptyset}{\prod\czero(e)}\Big)=
\]
\[
=\mathbf{c}(G_1,v^\eta)\cdot\Big(\sum_{S_2\in IS(G_2,v^\eta)}\underset{t\in S_2,t\ne v}{\prod\cjeden(t)}\cdot\underset{t\in V(G_2)-S_2,t\ne v}{\prod\czero(t)}\cdot\underset{e\in E(G_2),e\cap S_2=\emptyset}{\prod\czero(e)}\Big)=
\]
\[
=\mathbf{c'_\eta}(v)\cdot\Big(\sum_{S_2\in IS(G_2,v^\eta)}\underset{t\in S_2,t\ne v}{\prod\mathbf{c'_1}(t)}\cdot\underset{t\in V(G_2)-S_2,t\ne v}{\prod\mathbf{c'_0}(t)}\cdot\underset{e\in E(G_2),e\cap S_2=\emptyset}{\prod\mathbf{c'_0}(e)}\Big)=\mathbf{c'}(G',v^\eta).
\]

D2\\

Let $\zeta,\eta\in\{0,1\}$.  We have {$IS(G,u^\zeta,v^\eta)=\{S_1\cup S_2: S_1\in IS(G_1,u^\zeta,v^\eta),S_2\in IS(G_2,u^\zeta,v^\eta)\}$}. Recall that $\mathbf{c}(u^\zeta,v^\eta)=\mathbf{c}(G_1,u^\zeta,v^\eta):(\mathbf{c}_\zeta(u)\cdot\mathbf{c}_\eta(v))$.

Let us consider the case when $u$ and $v$ are adjacent.

\[
\mathbf{c}(G,u^1,v^0)=\Big(\sum_{S_1\in IS(G_1,u^1,v^0)}\underset{t\in S_1,t\ne u}{\prod\cjeden(t)}\cdot\underset{t\in V(G_1)-S_1,t\ne v}{\prod\czero(t)}\cdot\underset{e\in E(G_1),e\cap S_1=\emptyset}{\prod\czero(e)}\Big)\cdot
\]
\[
\cdot\mathbf{c_1}(u)\cdot\mathbf{c_0}(v)\cdot\Big(\sum_{S_2\in IS(G_2,u^1,v^0)}\underset{t\in S_2,t\ne u}{\prod\cjeden(t)}\cdot\underset{t\in V(G_2)-S_2,t\ne v}{\prod\czero(t)}\cdot\underset{e\in E(G_2),e\cap S_2=\emptyset}{\prod\czero(e)}\Big)=
\]
\[
=\mathbf{c}(u^1,v^0)\cdot\mathbf{c_1}(u)\cdot\mathbf{c_0}(v)\cdot\Big(\sum_{S_2\in IS(G_2,u^1,v^0)}\underset{t\in S_2,t\ne u}{\prod\cjeden(t)}\cdot\underset{t\in V(G_2)-S_2,t\ne v}{\prod\czero(t)}\cdot\underset{e\in E(G_2),e\cap S_2=\emptyset}{\prod\czero(e)}\Big)=
\]
\[
=\mathbf{c'_1}(u)\cdot\mathbf{c'_0}(v)\cdot\Big(\sum_{S_2\in IS(G_2,u^1,v^0)}\underset{t\in S_2,t\ne u}{\prod\mathbf{c'_1}(t)}\cdot\underset{t\in V(G_2)-S_2,t\ne v}{\prod\mathbf{c'_0}(t)}\cdot\underset{e\in E(G_2),e\cap S_2=\emptyset}{\prod\mathbf{c'_0}(e)}\Big)=
\]
\[
=\mathbf{c'}(G',u^1,v^0).
\]

\[
\mathbf{c}(G,u^0,v^1)=\Big(\sum_{S_1\in IS(G_1,u^0,v^1)}\underset{t\in S_1,t\ne v}{\prod\cjeden(t)}\cdot\underset{t\in V(G_1)-S_1,t\ne u}{\prod\czero(t)}\cdot\underset{e\in E(G_1),e\cap S_1=\emptyset}{\prod\czero(e)}\Big)\cdot
\]
\[
\cdot\mathbf{c_0}(u)\cdot\mathbf{c_1}(v)\cdot\Big(\sum_{S_2\in IS(G_2,u^0,v^1)}\underset{t\in S_2,t\ne v}{\prod\cjeden(t)}\cdot\underset{t\in V(G_2)-S_2,t\ne u}{\prod\czero(t)}\cdot\underset{e\in E(G_2),e\cap S_2=\emptyset}{\prod\czero(e)}\Big)=
\]
\[
=\mathbf{c}(u^0,v^1)\cdot\mathbf{c_0}(u)\cdot\mathbf{c_1}(v)\cdot\Big(\sum_{S_2\in IS(G_2,u^0,v^1)}\underset{t\in S_2,t\ne v}{\prod\cjeden(t)}\cdot\underset{t\in V(G_2)-S_2,t\ne u}{\prod\czero(t)}\cdot\underset{e\in E(G_2),e\cap S_2=\emptyset}{\prod\czero(e)}\Big)=
\]
\[
=\mathbf{c'_0}(u)\cdot\mathbf{c'_1}(v)\cdot\Big(\sum_{S_2\in IS(G_2,u^0,v^1)}\underset{t\in S_2,t\ne v}{\prod\mathbf{c'_1}(t)}\cdot\underset{t\in V(G_2)-S_2,t\ne u}{\prod\mathbf{c'_0}(t)}\cdot\underset{e\in E(G_2),e\cap S_2=\emptyset}{\prod\mathbf{c'_0}(e)}\Big)=
\]
\[
=\mathbf{c'}(G',u^0,v^1).
\]

\[
\mathbf{c}(G,u^0,v^0)=\Big(\sum_{S_1\in IS(G_1,u^0,v^0)}\underset{t\in S_1}{\prod\cjeden(t)}\cdot\underset{t\in V(G_1)-S_1,t\notin \{u,v\}}{\prod\czero(t)}\cdot\underset{e\in E(G_1),e\cap S_1=\emptyset, e\ne uv}{\prod\czero(e)}\Big)\cdot
\]
\[
\cdot\mathbf{c_0}(u)\cdot\mathbf{c_0}(v)\cdot\mathbf{c_0}(uv)\cdot\Big(\sum_{S_2\in IS(G_2,u^0,v^0)}\underset{t\in S_2}{\prod\cjeden(t)}\cdot\underset{t\in V(G_2)-S_2,t\notin \{u,v\}}{\prod\czero(t)}\cdot\underset{e\in E(G_2),e\cap S_2=\emptyset, e\ne uv}{\prod\czero(e)}\Big)=
\]
\[
=\mathbf{c}(u^0,v^0)\cdot\mathbf{c_0}(u)\cdot\mathbf{c_0}(v)\cdot\mathbf{c_0}(uv)\cdot\Big(\sum_{S_2\in IS(G_2,u^0,v^0)}\underset{t\in S_2}{\prod\cjeden(t)}\cdot\underset{t\in V(G_2)-S_2,t\notin \{u,v\}}{\prod\czero(t)}\cdot\underset{e\in E(G_2),e\cap S_2=\emptyset, e\ne uv}{\prod\czero(e)}\Big)=
\]
\[
=\mathbf{c'_0}(u)\cdot\mathbf{c'_0}(v)\cdot\mathbf{c'_0}(uv)\cdot\Big(\sum_{S_2\in IS(G_2,u^0,v^0)}\underset{t\in S_2}{\prod\mathbf{c'_1}(t)}\cdot\underset{t\in V(G_2)-S_2,t\notin \{u,v\}}{\prod\mathbf{c'_0}(t)}\cdot\underset{e\in E(G_2),e\cap S_2=\emptyset, e\ne uv}{\prod\mathbf{c'_0}(e)}\Big)=
\]
\[
=\mathbf{c'}(G',u^0,v^0).
\]

Now let us consider the case when $u$ and $v$ are non-adjacent and $\mathbf{c}(u^0,v^0)\cdot\mathbf{c}(u^1,v^1)=\mathbf{c}(u^0,v^1)\cdot\mathbf{c}(u^1,v^0)$.

\[
\mathbf{c}(G,u^1,v^1)=\Big(\sum_{S_1\in IS(G_1,u^1,v^1)}\underset{t\in S_1,t\notin \{u,v\}}{\prod\cjeden(t)}\cdot\underset{t\in V(G_1)-S_1}{\prod\czero(t)}\cdot\underset{e\in E(G_1),e\cap S_1=\emptyset}{\prod\czero(e)}\Big)\cdot
\]
\[
\cdot\mathbf{c_1}(u)\cdot\mathbf{c_1}(v)\cdot\Big(\sum_{S_2\in IS(G_2,u^1,v^1)}\underset{t\in S_2,t\notin \{u,v\}}{\prod\cjeden(t)}\cdot\underset{t\in V(G_2)-S_2}{\prod\czero(t)}\cdot\underset{e\in E(G_2),e\cap S_2=\emptyset}{\prod\czero(e)}\Big)=
\]
\[
=\mathbf{c}(u^1,v^1)\cdot\mathbf{c_1}(u)\cdot\mathbf{c_1}(v)\cdot\Big(\sum_{S_2\in IS(G_2,u^1,v^1)}\underset{t\in S_2,t\notin \{u,v\}}{\prod\cjeden(t)}\cdot\underset{t\in V(G_2)-S_2}{\prod\czero(t)}\cdot\underset{e\in E(G_2),e\cap S_2=\emptyset}{\prod\czero(e)}\Big)=
\]
\[
=\mathbf{c'_1}(u)\cdot\mathbf{c'_1}(v)\cdot\Big(\sum_{S_2\in IS(G_2,u^1,v^1)}\underset{t\in S_2,t\notin \{u,v\}}{\prod\mathbf{c'_1}(t)}\cdot\underset{t\in V(G_2)-S_2}{\prod\mathbf{c'_0}(t)}\cdot\underset{e\in E(G_2),e\cap S_2=\emptyset}{\prod\mathbf{c'_0}(e)}\Big)=
\]
\[
=\mathbf{c'}(G',u^1,v^1).
\]

\[
\mathbf{c}(G,u^0,v^1)=\Big(\sum_{S_1\in IS(G_1,u^0,v^1)}\underset{t\in S_1,t\ne v}{\prod\cjeden(t)}\cdot\underset{t\in V(G_1)-S_1,t\ne u}{\prod\czero(t)}\cdot\underset{e\in E(G_1),e\cap S_1=\emptyset}{\prod\czero(e)}\Big)\cdot
\]
\[
\cdot\mathbf{c_0}(u)\cdot\mathbf{c_1}(v)\cdot\Big(\sum_{S_2\in IS(G_2,u^0,v^1)}\underset{t\in S_2,t\ne v}{\prod\cjeden(t)}\cdot\underset{t\in V(G_2)-S_2,t\ne u}{\prod\czero(t)}\cdot\underset{e\in E(G_2),e\cap S_2=\emptyset}{\prod\czero(e)}\Big)=
\]
\[
=\mathbf{c}(u^0,v^1)\cdot\mathbf{c_0}(u)\cdot\mathbf{c_1}(v)\cdot\Big(\sum_{S_2\in IS(G_2,u^0,v^1)}\underset{t\in S_2,t\ne v}{\prod\cjeden(t)}\cdot\underset{t\in V(G_2)-S_2,t\ne u}{\prod\czero(t)}\cdot\underset{e\in E(G_2),e\cap S_2=\emptyset}{\prod\czero(e)}\Big)=
\]
\[
=\mathbf{c'_0}(u)\cdot\mathbf{c'_1}(v)\cdot\Big(\sum_{S_2\in IS(G_2,u^0,v^1)}\underset{t\in S_2,t\ne v}{\prod\mathbf{c'_1}(t)}\cdot\underset{t\in V(G_2)-S_2,t\ne u}{\prod\mathbf{c'_0}(t)}\cdot\underset{e\in E(G_2),e\cap S_2=\emptyset}{\prod\mathbf{c'_0}(e)}\Big)=
\]
\[
=\mathbf{c'}(G',u^0,v^1).
\]

\[
\mathbf{c}(G,u^1,v^0)=\Big(\sum_{S_1\in IS(G_1,u^1,v^0)}\underset{t\in S_1,t\ne u}{\prod\cjeden(t)}\cdot\underset{t\in V(G_1)-S_1,t\ne v}{\prod\czero(t)}\cdot\underset{e\in E(G_1),e\cap S_1=\emptyset}{\prod\czero(e)}\Big)\cdot
\]
\[
\cdot\mathbf{c_1}(u)\cdot\mathbf{c_0}(v)\cdot\Big(\sum_{S_2\in IS(G_2,u^1,v^0)}\underset{t\in S_2,t\ne u}{\prod\cjeden(t)}\cdot\underset{t\in V(G_2)-S_2,t\ne v}{\prod\czero(t)}\cdot\underset{e\in E(G_2),e\cap S_2=\emptyset}{\prod\czero(e)}\Big)=
\]
\[
=\mathbf{c}(u^1,v^0)\cdot\mathbf{c_1}(u)\cdot\mathbf{c_0}(v)\cdot\Big(\sum_{S_2\in IS(G_2,u^1,v^0)}\underset{t\in S_2,t\ne u}{\prod\cjeden(t)}\cdot\underset{t\in V(G_2)-S_2,t\ne v}{\prod\czero(t)}\cdot\underset{e\in E(G_2),e\cap S_2=\emptyset}{\prod\czero(e)}\Big)=
\]
\[
=\mathbf{c}(u^1,v^1)\cdot\mathbf{c_1}(u)\cdot\frac{\mathbf{c}(u^1,v^0)}{\mathbf{c}(u^1,v^1)}\cdot\mathbf{c_0}(v)\cdot\Big(\sum_{S_2\in IS(G_2,u^1,v^0)}\underset{t\in S_2,t\ne u}{\prod\cjeden(t)}\cdot\underset{t\in V(G_2)-S_2,t\ne v}{\prod\czero(t)}\cdot\underset{e\in E(G_2),e\cap S_2=\emptyset}{\prod\czero(e)}\Big)=
\]
\[
=\mathbf{c'_1}(u)\cdot\mathbf{c'_0}(v)\cdot\Big(\sum_{S_2\in IS(G_2,u^1,v^0)}\underset{t\in S_2,t\ne u}{\prod\mathbf{c'_1}(t)}\cdot\underset{t\in V(G_2)-S_2,t\ne v}{\prod\mathbf{c'_0}(t)}\cdot\underset{e\in E(G_2),e\cap S_2=\emptyset}{\prod\mathbf{c'_0}(e)}\Big)=
\]
\[
=\mathbf{c'}(G',u^1,v^0).
\]

\[
\mathbf{c}(G,u^0,v^0)=\Big(\sum_{S_1\in IS(G_1,u^0,v^0)}\underset{t\in S_1}{\prod\cjeden(t)}\cdot\underset{t\in V(G_1)-S_1,t\notin \{u,v\}}{\prod\czero(t)}\cdot\underset{e\in E(G_1),e\cap S_1=\emptyset}{\prod\czero(e)}\Big)\cdot
\]
\[
\cdot\mathbf{c_0}(u)\cdot\mathbf{c_0}(v)\cdot\Big(\sum_{S_2\in IS(G_2,u^0,v^0)}\underset{t\in S_2}{\prod\cjeden(t)}\cdot\underset{t\in V(G_2)-S_2,t\notin \{u,v\}}{\prod\czero(t)}\cdot\underset{e\in E(G_2),e\cap S_2=\emptyset}{\prod\czero(e)}\Big)=
\]
\[
=\mathbf{c}(u^0,v^0)\cdot\mathbf{c_0}(u)\cdot\mathbf{c_0}(v)\cdot\Big(\sum_{S_2\in IS(G_2,u^0,v^0)}\underset{t\in S_2}{\prod\cjeden(t)}\cdot\underset{t\in V(G_2)-S_2,t\notin \{u,v\}}{\prod\czero(t)}\cdot\underset{e\in E(G_2),e\cap S_2=\emptyset}{\prod\czero(e)}\Big)=
\]
\[
=\frac{\mathbf{c}(u^0,v^1)\cdot\mathbf{c}(u^1,v^0)}{\mathbf{c}(u^1,v^1)}\cdot\mathbf{c_0}(u)\cdot\mathbf{c_0}(v)\cdot\Big(\sum_{S_2\in IS(G_2,u^0,v^0)}\underset{t\in S_2}{\prod\cjeden(t)}\cdot\underset{t\in V(G_2)-S_2,t\notin \{u,v\}}{\prod\czero(t)}\cdot\underset{e\in E(G_2),e\cap S_2=\emptyset}{\prod\czero(e)}\Big)=
\]
\[
=\mathbf{c}(u^0,v^1)\cdot\mathbf{c_0}(u)\cdot\frac{\mathbf{c}(u^1,v^0)}{\mathbf{c}(u^1,v^1)}\cdot\mathbf{c_0}(v)\cdot\Big(\sum_{S_2\in IS(G_2,u^0,v^0)}\underset{t\in S_2}{\prod\cjeden(t)}\cdot\underset{t\in V(G_2)-S_2,t\notin \{u,v\}}{\prod\czero(t)}\cdot\underset{e\in E(G_2),e\cap S_2=\emptyset}{\prod\czero(e)}\Big)=
\]
\[
=\mathbf{c'_1}(u)\cdot\mathbf{c'_0}(v)\cdot\Big(\sum_{S_2\in IS(G_2,u^0,v^0)}\underset{t\in S_2}{\prod\mathbf{c'_1}(t)}\cdot\underset{t\in V(G_2)-S_2,t\notin \{u,v\}}{\prod\mathbf{c'_0}(t)}\cdot\underset{e\in E(G_2),e\cap S_2=\emptyset}{\prod\mathbf{c'_0}(e)}\Big)=
\]
\[
=\mathbf{c'}(G',u^0,v^0).
\]

Finally, we consider the case when $u$ and $v$ are non-adjacent and $\mathbf{c}(u^0,v^0)\cdot\mathbf{c}(u^1,v^1)\ne\mathbf{c}(u^0,v^1)\cdot\mathbf{c}(u^1,v^0)$.

\[
\mathbf{c}(G,u^1,v^1)=\Big(\sum_{S_1\in IS(G_1,u^1,v^1)}\underset{t\in S_1,t\notin \{u,v\}}{\prod\cjeden(t)}\cdot\underset{t\in V(G_1)-S_1}{\prod\czero(t)}\cdot\underset{e\in E(G_1),e\cap S_1=\emptyset}{\prod\czero(e)}\Big)\cdot
\]
\[
\cdot\mathbf{c_1}(u)\cdot\mathbf{c_1}(v)\cdot\Big(\sum_{S_2\in IS(G_2,u^1,v^1)}\underset{t\in S_2,t\notin \{u,v\}}{\prod\cjeden(t)}\cdot\underset{t\in V(G_2)-S_2}{\prod\czero(t)}\cdot\underset{e\in E(G_2),e\cap S_2=\emptyset}{\prod\czero(e)}\Big)=
\]
\[
=\mathbf{c}(u^1,v^1)\cdot\mathbf{c_1}(u)\cdot\mathbf{c_1}(v)\cdot\Big(\sum_{S_2\in IS(G_2,u^1,v^1)}\underset{t\in S_2,t\notin \{u,v\}}{\prod\cjeden(t)}\cdot\underset{t\in V(G_2)-S_2}{\prod\czero(t)}\cdot\underset{e\in E(G_2),e\cap S_2=\emptyset}{\prod\czero(e)}\Big)=
\]
\[
=\mathbf{c'_0}(x)\cdot\mathbf{c'_1}(u)\cdot\mathbf{c'_1}(v)\cdot\Big(\sum_{S_2\in IS(G_2,u^1,v^1)}\underset{t\in S_2,t\notin \{u,v\}}{\prod\mathbf{c'_1}(t)}\cdot\underset{t\in V(G_2)-S_2}{\prod\mathbf{c'_0}(t)}\cdot\underset{e\in E(G_2),e\cap S_2=\emptyset}{\prod\mathbf{c'_0}(e)}\Big)=
\]
\[
=\mathbf{c'}(G',u^1,v^1).
\]

\[
\mathbf{c}(G,u^0,v^1)=\Big(\sum_{S_1\in IS(G_1,u^0,v^1)}\underset{t\in S_1,t\ne v}{\prod\cjeden(t)}\cdot\underset{t\in V(G_1)-S_1,t\ne u}{\prod\czero(t)}\cdot\underset{e\in E(G_1),e\cap S_1=\emptyset}{\prod\czero(e)}\Big)\cdot
\]
\[
\cdot\mathbf{c_0}(u)\cdot\mathbf{c_1}(v)\cdot\Big(\sum_{S_2\in IS(G_2,u^0,v^1)}\underset{t\in S_2,t\ne v}{\prod\cjeden(t)}\cdot\underset{t\in V(G_2)-S_2,t\ne u}{\prod\czero(t)}\cdot\underset{e\in E(G_2),e\cap S_2=\emptyset}{\prod\czero(e)}\Big)=
\]
\[
=\mathbf{c}(u^0,v^1)\cdot\mathbf{c_0}(u)\cdot\mathbf{c_1}(v)\cdot\Big(\sum_{S_2\in IS(G_2,u^0,v^1)}\underset{t\in S_2,t\ne v}{\prod\cjeden(t)}\cdot\underset{t\in V(G_2)-S_2,t\ne u}{\prod\czero(t)}\cdot\underset{e\in E(G_2),e\cap S_2=\emptyset}{\prod\czero(e)}\Big)=
\]
\[
=\mathbf{c}(u^1,v^1)\cdot\frac{\mathbf{c}(u^0,v^1)}{\mathbf{c}(u^1,v^1)}\cdot\mathbf{c_0}(u)\cdot\mathbf{c_1}(v)\cdot\Big(\sum_{S_2\in IS(G_2,u^0,v^1)}\underset{t\in S_2,t\ne v}{\prod\cjeden(t)}\cdot\underset{t\in V(G_2)-S_2,t\ne u}{\prod\czero(t)}\cdot\underset{e\in E(G_2),e\cap S_2=\emptyset}{\prod\czero(e)}\Big)=
\]
\[
=\mathbf{c'_0}(x)\cdot\mathbf{c'_0}(ux)\cdot\mathbf{c'_0}(u)\cdot\mathbf{c'_1}(v)\cdot\Big(\sum_{S_2\in IS(G_2,u^0,v^1)}\underset{t\in S_2,t\ne v}{\prod\mathbf{c'_1}(t)}\cdot\underset{t\in V(G_2)-S_2,t\ne u}{\prod\mathbf{c'_0}(t)}\cdot\underset{e\in E(G_2),e\cap S_2=\emptyset}{\prod\mathbf{c'_0}(e)}\Big)=
\]
\[
=\mathbf{c'}(G',u^0,v^1).
\]

\[
\mathbf{c}(G,u^1,v^0)=\Big(\sum_{S_1\in IS(G_1,u^1,v^0)}\underset{t\in S_1,t\ne u}{\prod\cjeden(t)}\cdot\underset{t\in V(G_1)-S_1,t\ne v}{\prod\czero(t)}\cdot\underset{e\in E(G_1),e\cap S_1=\emptyset}{\prod\czero(e)}\Big)\cdot
\]
\[
\cdot\mathbf{c_1}(u)\cdot\mathbf{c_0}(v)\cdot\Big(\sum_{S_2\in IS(G_2,u^1,v^0)}\underset{t\in S_2,t\ne u}{\prod\cjeden(t)}\cdot\underset{t\in V(G_2)-S_2,t\ne v}{\prod\czero(t)}\cdot\underset{e\in E(G_2),e\cap S_2=\emptyset}{\prod\czero(e)}\Big)=
\]
\[
=\mathbf{c}(u^1,v^0)\cdot\mathbf{c_1}(u)\cdot\mathbf{c_0}(v)\cdot\Big(\sum_{S_2\in IS(G_2,u^1,v^0)}\underset{t\in S_2,t\ne u}{\prod\cjeden(t)}\cdot\underset{t\in V(G_2)-S_2,t\ne v}{\prod\czero(t)}\cdot\underset{e\in E(G_2),e\cap S_2=\emptyset}{\prod\czero(e)}\Big)=
\]
\[
=\mathbf{c}(u^1,v^1)\cdot\frac{\mathbf{c}(u^1,v^0)}{\mathbf{c}(u^1,v^1)}\cdot\mathbf{c_1}(u)\cdot\mathbf{c_0}(v)\cdot\Big(\sum_{S_2\in IS(G_2,u^1,v^0)}\underset{t\in S_2,t\ne u}{\prod\cjeden(t)}\cdot\underset{t\in V(G_2)-S_2,t\ne v}{\prod\czero(t)}\cdot\underset{e\in E(G_2),e\cap S_2=\emptyset}{\prod\czero(e)}\Big)=
\]
\[
=\mathbf{c'_0}(x)\cdot\mathbf{c'_0}(vx)\cdot\mathbf{c'_1}(u)\cdot\mathbf{c'_0}(v)\cdot\Big(\sum_{S_2\in IS(G_2,u^1,v^0)}\underset{t\in S_2,t\ne u}{\prod\mathbf{c'_1}(t)}\cdot\underset{t\in V(G_2)-S_2,t\ne v}{\prod\mathbf{c'_0}(t)}\cdot\underset{e\in E(G_2),e\cap S_2=\emptyset}{\prod\mathbf{c'_0}(e)}\Big)=
\]
\[
=\mathbf{c'}(G',u^1,v^0).
\]

\[
\mathbf{c}(G,u^0,v^0)=\Big(\sum_{S_1\in IS(G_1,u^0,v^0)}\underset{t\in S_1}{\prod\cjeden(t)}\cdot\underset{t\in V(G_1)-S_1,t\notin \{u,v\}}{\prod\czero(t)}\cdot\underset{e\in E(G_1),e\cap S_1=\emptyset}{\prod\czero(e)}\Big)\cdot
\]
\[
\cdot\mathbf{c_0}(u)\cdot\mathbf{c_0}(v)\cdot\Big(\sum_{S_2\in IS(G_2,u^0,v^0)}\underset{t\in S_2}{\prod\cjeden(t)}\cdot\underset{t\in V(G_2)-S_2,t\notin \{u,v\}}{\prod\czero(t)}\cdot\underset{e\in E(G_2),e\cap S_2=\emptyset}{\prod\czero(e)}\Big)=
\]
\[
=\mathbf{c}(u^0,v^0)\cdot\mathbf{c_0}(u)\cdot\mathbf{c_0}(v)\cdot\Big(\sum_{S_2\in IS(G_2,u^0,v^0)}\underset{t\in S_2}{\prod\cjeden(t)}\cdot\underset{t\in V(G_2)-S_2,t\notin \{u,v\}}{\prod\czero(t)}\cdot\underset{e\in E(G_2),e\cap S_2=\emptyset}{\prod\czero(e)}\Big)=
\]
\[
=\Big(\frac{\mathbf{c}(u^0,v^0)\cdot\mathbf{c}(u^1,v^1)}{\mathbf{c}(u^1,v^1)}-\frac{\mathbf{c}(u^0,v^1)\cdot\mathbf{c}(u^1,v^0)}{\mathbf{c}(u^1,v^1)}+\frac{\mathbf{c}(u^0,v^1)\cdot\mathbf{c}(u^1,v^0)}{\mathbf{c}(u^1,v^1)}\Big)\cdot
\]
\[
\cdot\mathbf{c_0}(u)\cdot\mathbf{c_0}(v)\cdot\Big(\sum_{S_2\in IS(G_2,u^0,v^0)}\underset{t\in S_2}{\prod\cjeden(t)}\cdot\underset{t\in V(G_2)-S_2,t\notin \{u,v\}}{\prod\czero(t)}\cdot\underset{e\in E(G_2),e\cap S_2=\emptyset}{\prod\czero(e)}\Big)=
\]
\[
=\Big(\frac{\mathbf{c}(u^0,v^0)\cdot\mathbf{c}(u^1,v^1)-\mathbf{c}(u^0,v^1)\cdot\mathbf{c}(u^1,v^0)}{\mathbf{c}(u^1,v^1)}+\mathbf{c}(u^1,v^1)\cdot\frac{\mathbf{c}(u^0,v^1)}{\mathbf{c}(u^1,v^1)}\cdot\frac{\mathbf{c}(u^1,v^0)}{\mathbf{c}(u^1,v^1)}\Big)\cdot
\]
\[
\cdot\mathbf{c_0}(u)\cdot\mathbf{c_0}(v)\cdot\Big(\sum_{S_2\in IS(G_2,u^0,v^0)}\underset{t\in S_2}{\prod\cjeden(t)}\cdot\underset{t\in V(G_2)-S_2,t\notin \{u,v\}}{\prod\czero(t)}\cdot\underset{e\in E(G_2),e\cap S_2=\emptyset}{\prod\czero(e)}\Big)=
\]
\[
=\Big(\mathbf{c'_1}(x)+\mathbf{c'_0}(x)\cdot\mathbf{c'_0}(ux)\cdot\mathbf{c'_0}(ux)\Big)\cdot\mathbf{c'_0}(u)\cdot\mathbf{c'_0}(v)\cdot
\]
\[
\cdot\Big(\sum_{S_2\in IS(G_2,u^0,v^0)}\underset{t\in S_2}{\prod\mathbf{c'_1}(t)}\cdot\underset{t\in V(G_2)-S_2,t\notin \{u,v\}}{\prod\mathbf{c'_0}(t)}\cdot\underset{e\in E(G_2),e\cap S_2=\emptyset}{\prod\mathbf{c'_0}(e)}\Big)=
\]
\[
=\mathbf{c'}(G',u^0,v^0).
\]

\end{proof}

\bigskip


\begin{thebibliography}{99}
\bibitem{bh} A. Bj\"orklund, T. Husfeldt, and M. Koivisto, {Set partitioning via inclusion-exclusion}, SIAM J. Comput. 39 (2)
(2009) 546-563.


\bibitem{dal} V. Dahll\"of, P. Jonsson, M. Wahlstr\"om, {Counting models for 2SAT and 3SAT formulae}, Theor. Comput. Sci. 332 (2005) 265-291.


\bibitem{fomin2} F. Fomin,  D. Kratsch,  {Exact Exponential Algorithms}, Springer, Berlin - Heidelberg, 2010. 

\bibitem{furer} M. F\"urer and S. P. Kasiviswanathan, {Algorithms for counting 2-SAT solutions and colorings with applications}, Tech. Rep. 05-033, ECCC (2005).

\bibitem{gaspers} S. Gaspers, D. Kratsch, M. Liedloff, {On independent sets and bicliques in graphs}, Algorithmica 62 (3--4) (2012) 637--658.

\bibitem{arxiv} S. Gaspers, G. Sorkin, {Separate, measure and conquer: faster algorithms for max 2-CSP and counting dominating sets}, arXiv:1404.0753v1.




\bibitem{wg} K. Junosza-Szaniawski, Z. Lonc,  M. Tuczyński, {Counting independent sets in claw-free graphs}, Proceedings of WG2011, LNCS 6986, 227--237.

\bibitem{sofsem} K. Junosza-Szaniawski, M. Tuczyński, {Counting maximal independent sets in subcubic graphs}, Proceedings of SOFSEM 2012, LNCS 7147, 325--336 .

%
%


\bibitem{monien} B. Monien, R. Preis, Upper bounds on the bisection width of
3- and 4-regular graphs, Journal of Discrete Algorithms 4 (2006) 475--498.



\bibitem{vadhan} S. P. Vadhan, {The Complexity of Counting in Sparse, Regular, and Planar Graphs}, SIAM J. on  Comput. 31 (1997) 398--427.

\bibitem{walstrom} M. Wahlstr\"om,{ A Tighter Bound for Counting Max-Weight Solutions to 2SAT Instances}, Lect. Notes in Comput. Sci.
5018 (2008) 202--213.


\end{thebibliography}
\end{document}